\title{Unifying Sequent Systems for G{\"o}del-L{\"o}b Provability Logic via Syntactic Transformations} 
\titlerunning{Unifying Sequent Systems for G{\"o}del-L{\"o}b Provability Logic} 
\author{Tim S. Lyon}{Technische Universit{\"a}t Dresden, Germany \and \url{https://sites.google.com/view/timlyon} }{timothy_stephen.lyon@tu-dresden.de}{https://orcid.org/0000-0003-3214-0828}{}
\authorrunning{T.S. Lyon} 
\keywords{Cyclic proof, G{\"o}del-L{\"o}b logic, Labeled sequent, Linear nested sequent, Modal logic, Non-wellfounded proof, Proof theory, Proof transformation, Tree-hypersequent} 
\newtheorem{obs}{Observation}
\def\moverlay{\mathpalette\mov@rlay}
\def\mov@rlay#1#2{\leavevmode\vtop{%
   \baselineskip\z@skip \lineskiplimit-\maxdimen
   \ialign{\hfil$\m@th#1##$\hfil\cr#2\crcr}}}
\newcommand{\charfusion}[3][\mathord]{
    #1{\ifx#1\mathop\vphantom{#2}\fi
        \mathpalette\mov@rlay{#2\cr#3}
      }
    \ifx#1\mathop\expandafter\displaylimits\fi}
\newcommand{\block}{\mathrm{B}}
\newcommand{\blockii}{\mathrm{R}}
\newcommand{\permabove}{\text{\raise1pt\hbox{\rotatebox[origin=c]{45}{$\rightharpoonup$}}}}
\newcommand{\permbelow}{\text{\raise1pt\hbox{\rotatebox[origin=c]{45}{$\downarrow$}}}}
\newcommand{\gtgl}{\mathsf{G3KGL}}
\newcommand{\treegl}{\mathsf{CSGL}^{\!*}}
\newcommand{\lingl}{\mathsf{LNGL}}
\newcommand{\seqgl}{\mathsf{GL_{seq}}}
\newcommand{\seqglc}{\mathsf{GL_{circ}}}
\newcommand{\seqgli}{\mathsf{GL_{\infty}}}
\newcommand{\glbox}{\Box_{\mathsf{GL}}}
\newcommand{\logicgl}{\mathsf{GL}}
\newcommand{\semrel}{\models}
\newcommand{\rel}{\mathcal{R}}
\newcommand{\tel}{\mathcal{T}}
\newcommand{\idi}{\mathsf{id_{1}}}
\newcommand{\idii}{\mathsf{id_{2}}}
\newcommand{\disl}{\lor\mathsf{L}}
\newcommand{\disr}{{\lor}\mathsf{R}}
\newcommand{\boxl}{\Box\mathsf{L}}
\newcommand{\boxr}{\Box\mathsf{R}}
\newcommand{\boxlt}{\Box\mathsf{L}}
\newcommand{\boxrt}{\Box\mathsf{R}}
\newcommand{\negl}{\neg\mathsf{L}}
\newcommand{\negr}{\neg\mathsf{R}}
\newcommand{\irref}{\mathsf{ir}}
\newcommand{\trans}{\mathsf{tr}}
\newcommand{\fourru}{\mathsf{4L}}
\newcommand{\fourbox}{\Box_{\mathsf{4}}}
\newcommand{\lnsg}{\mathcal{G}}
\newcommand{\lnsh}{\mathcal{H}}
\newcommand{\sep}{\sslash}
\newcommand{\axi}{(\mathrm{A}1)}
\newcommand{\axii}{(\mathrm{A}2)}
\newcommand{\axiii}{(\mathrm{A}3)}
\newcommand{\axK}{(\mathrm{K})}
\newcommand{\axlob}{(\mathrm{L})}
\newcommand{\mpon}{(\mathrm{mp})}
\newcommand{\nec}{(\mathrm{n})}
\newcommand{\formint}{f}
\newcommand{\iffi}{\textit{iff} }
\newcommand{\tim}[1]{{\color{blue}T:  #1 }}
\renewcommand{\phi}{\varphi}
\newcommand{\wk}{\mathsf{w}}
\newcommand{\ctrl}{\mathsf{cL}}
\newcommand{\ctrr}{\mathsf{cR}}
\newcommand{\imp}{\rightarrow}
\newcommand{\len}[1]{||#1||}
\newcommand{\lsub}{(x/y)}
\newcommand{\prf}{\pi}
\newcommand{\ru}{\mathsf{r}}
\newcommand{\ruprime}{\mathsf{r}'}
\newcommand{\cut}{\mathsf{cut}}
\newcommand{\lobr}{\mathsf{L\ddot{o}b}}
\newenvironment{customlem}[1]
  {\innercustomlem}
  {\endinnercustomlem}
\newenvironment{customthm}[1]
  {\innercustomthm}
  {\endinnercustomthm}
\newcommand{\lab}{\mathrm{Lab}}
\newcommand{\propset}{\mathsf{Prop}}
\newcommand{\lang}{\mathcal{L}}
\newcommand{\id}{\mathsf{id}}
\newcommand{\sar}{\vdash}
\newcommand{\seq}{S}
\newcommand{\tseq}{T}
\newcommand{\sect}{Section} 
\newcommand{\thm}{Theorem} 
\newcommand{\fig}{Figure} 
\newcommand{\ifandonlyif}{iff } 
\newcommand{\assign}{\mu}
\newcommand{\assu}{\mathrm{A}}
\newcommand{\bassu}{\mathrm{BA}}
\begin{document}

\maketitle

\begin{abstract}
We demonstrate the inter-translatability of proofs between the most prominent sequent-based formalisms for Gödel-Löb provability logic. In particular, we consider Sambin and Valentini's sequent system $\seqgl$, Shamkanov's non-wellfounded and cyclic sequent systems $\seqgli$ and $\seqglc$, Poggiolesi's tree-hypersequent system $\treegl$, and Negri's labeled sequent system $\gtgl$. Shamkanov provided proof-theoretic correspondences between $\seqgl$, $\seqgli$, and $\seqglc$, and Gor{\'e} and Ramanayake showed how proofs could be transformed between $\treegl$ and $\gtgl$; however, the exact nature of proof transformations between the former three systems and the latter two systems has remained an open problem.  We solve this open problem by showing how to restructure tree-hypersequent proofs into an end-active form and introduce a novel \emph{linearization technique} that transforms such proofs into linear nested sequent proofs. As a result, we obtain a new proof-theoretic tool for extracting linear nested sequent systems from tree-hypersequent systems, which yields the first cut-free linear nested sequent calculus $\lingl$ for Gödel-Löb provability logic. We show how to transform proofs in $\lingl$ into a certain normal form, where proofs repeat in stages of modal and local rule applications, and which are translatable into $\seqgl$ and $\gtgl$ proofs. These new syntactic transformations, together with those mentioned above, establish full proof-theoretic correspondences between $\seqgl$, $\seqgli$, $\seqglc$, $\treegl$, $\gtgl$, and $\lingl$ while also giving (to the best of the author's knowledge) the first constructive proof mappings between structural (viz. labeled, tree-hypersequent, and linear nested sequent) systems and a cyclic sequent system.
\end{abstract}


\section{Introduction}\label{sec:intro}

Provability logics are a class of modal logics where the $\Box$ operator is read as `it is provable that' in some arithmetical theory. One of the most prominent provability logics is Gödel-Löb logic ($\logicgl$), which arose out of the work of Löb, who formulated a set of conditions on the provability predicate of Peano Arithmetic (PA). The logic $\logicgl$ can be axiomatized as an extension of the basic modal logic $\mathsf{K}$ by the single axiom $\Box (\Box \phi \rightarrow \phi) \rightarrow \Box \phi$, called \emph{Löb's axiom}. It is well-known that the axioms of $\logicgl$ are sound and complete relative to transitive and conversely-wellfounded relational models~\cite{Seg71}. In a landmark result, Solovay~\cite{Sol76} remarkably showed that $\logicgl$ is complete for PA's provability logic, i.e., $\logicgl$ proves everything that PA can prove about its own provability predicate. 



The logic $\logicgl$ enjoys a rich structural proof theory, possessing a number of cut-free sequent-style systems. 
Sequent systems in the style of Gentzen were originally provided by Sambin and Valentini in the early 1980s~\cite{SamVal80,SamVal82}; see also Avron~\cite{Avr84}. (NB. In this work, we take a \emph{Gentzen system} to be a proof system whose rules operate over \emph{Gentzen sequents}, i.e., expressions of the form $\phi_{1}, \ldots, \phi_{n} \vdash \psi_{1}, \ldots, \psi_{k}$ such that $\phi_{i}$ and $\psi_{j}$ are logical formulae.) Since then, a handful of alternative systems have been introduced, each of which either generalizes the structure of sequents or generalizes the notion of proof. The labeled sequent system $\gtgl$ was provided by Negri~\cite{Neg05} and uses \emph{labeled sequents} in proofs, which are binary graphs whose nodes are Gentzen sequents. In a similar vein, the tree-hypersequent system $\treegl$ was provided by Poggiolesi~\cite{Pog09b} and uses \emph{tree-hypersequents} in proofs, which are trees whose nodes are Gentzen sequents. The use of \emph{(types of) graphs} of Gentzen sequents in proofs (as in $\gtgl$ and $\treegl$) allows for the systems to possess properties beyond those of the original Gentzen systems~\cite{SamVal80,SamVal82}. For example, both $\gtgl$ and $\treegl$ enjoy invertibility of all rules, rules are symmetric (i.e., for each logical connective, there is at least one rule that introduces it in the antecedent and at least one rule that introduces it in the consequent of the rule's conclusion),
and the close connection between the syntax of such sequents and $\logicgl$'s relational semantics makes such systems suitable for counter-model extraction. 


Rather than generalizing the structure of sequents, Shamkanov~\cite{Sha14} showed that one could obtain alternative cut-free sequent systems for $\logicgl$ by generalizing the structure of proofs. In particular, by taking the sequent calculus for the modal logic $\mathsf{K4}$ and allowing for \emph{non-wellfounded proofs}, one obtains a non-wellfounded sequent system $\seqgli$ for $\logicgl$. Non-wellfounded proofs were introduced to capture (co)inductive reasoning, and are potentially infinite trees of sequents such that (1) every parent node is the conclusion of a rule with its children the corresponding premises and (2) infinite branches satisfy a certain \emph{progress condition}, which ensures soundness (cf.~\cite{BroSim10,DasGieMar24,NiwWal96,Sha14}). For $\logicgl$, non-wellfounded proofs correspond to regular trees (i.e., only contain finitely many distinct sub-trees), which means such proofs can be `folded' into finite trees of sequents such that leaves are `linked' to internal nodes of the tree, giving rise to \emph{cyclic proofs} (cf.~\cite{AfsLei17,Bro05,Bro07,DasGir23}). Shamkanov~\cite{Sha14} additionally showed that one could obtain a cut-free cyclic sequent system $\seqglc$ for $\logicgl$ by allowing cyclic proofs in $\mathsf{K4}$'s sequent calculus, which was then used to provide the first syntactic proof of the 
Lyndon interpolation property 
for $\logicgl$.

Due to the diversity in $\logicgl$'s proof theory, it is natural to wonder about the relationships between the various systems that have been introduced. Typically, proof systems are related by means of \emph{proof transformations}, which are functions that map proofs from one calculus into another, are sensitive to the structure of the input proof, and operate syntactically by permuting rules, replacing rules, or adding/deleting sequent structure in the input proof to yield the output proof. Studying proof transformations between sequent systems is a beneficial enterprise as it lets one transfer results from one system to another, thus alleviating the need of independent proofs in each system (e.g.,~\cite{CiaLyoRamTiu21,GorRam12}). Moreover, one can measure the relative sizes or certain characteristics of proofs, giving insight into which systems are better suited for specific (automated) reasoning tasks, and letting one `toggle' between differing formalisms when one is better suited for a task than another (e.g.,~\cite{LyoTiuGorClo20,Lyo21thesis}). 

Indeed, the question of the relationship between the labeled and tree-hypersequent systems 
was asked by Poggiolesi~\cite{Pog09b} and answered in full by Gor{\'e} and Ramanayake~\cite{GorRam12}, whose work provided constructive mappings of proofs between the systems. 
Similarly, Shamkanov~\cite{Sha14} provided syntactic mappings of proofs between the systems $\seqgli$ and $\seqglc$, and the Gentzen system $\seqgl$ (an equivalent reformulation of Sambin and Valentini's systems~\cite{SamVal80,SamVal82}). Nevertheless, the inter-translatability of proofs between the former two structural sequent systems ($\gtgl$ and $\treegl$) and the latter three sequent systems ($\seqgl$, $\seqgli$, and $\seqglc$) has yet to be identified, and presents a non-trivial open problem that we solve in this paper. Thus, our first contribution in this paper is to `complete the picture' and establish complete correspondences between the above five mentioned sequent systems by means of syntactic proof transformations. 

There is an inherent difficulty in transforming proofs that use structural sequents (e.g., labeled sequents or tree-hypersequents) into proofs that use Gentzen sequents. This is due to the fact that structural sequents are (types of) graphs of Gentzen sequents, and thus, possess a more complicated structure that must be properly `shed' during proof transformations~\cite{Lyo21thesis,Tic23}. To overcome this difficulty and define proof transformations from $\gtgl$ and $\treegl$ to $\seqgl$, we rely on three techniques: first, we show how to restructure proofs in $\treegl$ so that they are \emph{end-active} (cf.~\cite{Lel15}), meaning rules only affect data at leaves or parents of leaves in tree-hypersequents. Second, we introduce a novel \emph{linearization technique}, whereby we show how to shed the tree structure of tree-hypersequents in end-active proofs, yielding a proof consisting solely of \emph{linear nested sequents}~\cite{Lel15}, i.e., lines whose nodes are Gentzen sequents. Linear nested sequents were introduced as an alternative (albeit equivalent) formalism to 2-sequents~\cite{Mas92,Mas93} that allows for sequent systems with complexity-optimal proof-search that also retain fundamental admissibility and invertibility properties~\cite{Lel15}. The presented linearization technique is new and shows how to extract linear nested sequent systems from tree-hypersequent systems, serving as the second contribution of this paper. We conjecture that this method can be generalized and applied in other settings to provide new linear nested sequent systems for modal and related logics. The technique also yields the first (cut-free) linear nested sequent calculus for $\logicgl$, which we dub $\lingl$, and which is the third contribution of this paper. Last, we show that proofs in $\lingl$ can be put into a specific normal form that repeats in stages of modal and local rules. Such proofs are translatable into $\seqgl$ proofs, which are translatable into $\gtgl$ proofs, thus establishing purely syntactic proof transformations between the most prominent sequent systems for $\logicgl$. 
These proof transformations and systemic correspondences are summarized in Figure~\ref{fig:correspondence-diagram} below.

\textbf{Outline of Paper.} In \cref{sec:logical-prelims}, we recall the language, semantics, and axioms of Gödel-Löb logic. In \cref{sec:label-calc}, we discuss Negri's labeled system $\gtgl$~\cite{Neg05}, Poggiolesi's tree-hypersequent system $\treegl$~\cite{Pog09b}, and Gor{\'e} and Ramanayake's correspondence result for the two systems~\cite{GorRam12}. In \cref{sec:linearizing}, we show how to put proofs in $\treegl$ into an end-active form (\cref{thm:proofs-to-end-active}) and specify our novel linearization method (\cref{thm:tree-to-lin}), which yields the new linear nested sequent system $\lingl$ for $\logicgl$. In \cref{sec:main-theorem}, we recall the sequent calculus $\seqgl$ due to Sambin and Valentini~\cite{SamVal80,SamVal82}, Shamkanov's non-wellfounded system $\seqgli$ and cyclic system $\seqglc$, as well as Shamkanov's correspondence result for the aforementioned three systems~\cite{Sha14}. We then show how to transform proofs in $\lingl$ into proofs in $\seqgl$ (\cref{thm:lin-to-seq}) and how to transform proofs in $\seqgl$ into proofs in $\gtgl$ (\cref{thm:seq-to-lab}). This establishes a six-way correspondence between the systems $\gtgl$, $\treegl$, $\seqgl$, $\seqgli$, $\seqglc$, and $\lingl$. 
 Last, in \cref{sec:conclusion}, we conclude and discuss future work. 



\begin{figure}[h]

\begin{minipage}{0.45\textwidth}
\begin{tabular}{c}
\xymatrix@C+=5em@R=4em{
\gtgl\ar@{<->}[r]^{\cite{GorRam12}}	
&  \treegl\ar@{->}[r]^{\textrm{Thm.}~\ref{thm:tree-to-lin}} 
& \lingl\ar@{->}[dl]_{\textrm{Thm.}~\ref{thm:lin-to-seq}} \\
\seqgli\ar@{<->}[r]^{\cite{Sha14}} 
& \seqgl\ar@{->}[ul]_{\textrm{Thm.}~\ref{thm:seq-to-lab}}
& \seqglc\ar@{<->}[l]_{\cite{Sha14}}
} 
\end{tabular}
\end{minipage}
\begin{minipage}{0.05\textwidth}
\ \\
\end{minipage}
\begin{minipage}{0.45\textwidth}
\bgroup
\def\arraystretch{.85}
\begin{tabular}{|c|c|}
\hline
Name & Type of System\\
\hline
$\gtgl$ & Labeled Sequent System\\
$\treegl$ & Tree-Hypersequent System\\
$\lingl$ & Linear Nested Sequent System\\
$\seqgl$ & Gentzen System\\
$\seqgli$ & Non-Wellfounded Sequent System\\
$\seqglc$ & Cyclic Sequent System\\
\hline
\end{tabular}
\egroup
\end{minipage}

\caption{Proof transformations and correspondences between sequent systems for $\logicgl$.\label{fig:correspondence-diagram}}
\end{figure}

\section{G\"odel-L\"ob Provability Logic}\label{sec:logical-prelims}

We let $\propset := \{p,q,r,\ldots\}$ be a countable set of \emph{propositional atoms} and define the language $\lang$ to be the set of all formulae generated via the following grammar in BNF:
$$
\phi ::= p \ | \ \neg \phi \ | \ \phi \lor \phi \ | \ \Box \phi
$$
where $p$ ranges over $\propset$. We use $\phi$, $\psi$, $\chi$, $\ldots$ to denote formulae in $\lang$ and 
define $\phi \land \psi := \neg (\neg \phi \lor \neg \psi)$ and $\phi \imp \psi := \neg \phi \lor \psi$ as usual. 


\begin{definition}[Model]\label{def:model} We define a \emph{model} to be a tuple $M = (W,R,V)$ such that
\begin{itemize}

\item $W$ is a non-empty set of worlds $w$, $u$, $v$, $\ldots$ (occasionally annotated);

\item $R \subseteq W \times W$ is transitive and conversely-wellfounded;\footnote{We note that $R$ is conversely-wellfounded \iffi it does not contain any infinite ascending $R$-chains.}

\item $V : \propset \to 2^{W}$ is a \emph{valuation function}.

\end{itemize}
\end{definition}

\begin{definition}[Semantic Clauses]\label{def:semantics} 
We define the \emph{satisfaction} of a formula $\phi$ in a model $M$ at world $w$, written $M,w \semrel \phi$, recursively as follows:
\begin{itemize}

\item $M,w \semrel p$ \ifandonlyif $w \in V(p)$;

\item $M,w \semrel \neg \phi$ \ifandonlyif $M,w \not\semrel \phi$;

\item $M,w \semrel \phi \lor \psi$ \ifandonlyif $M,w \semrel \phi$ or $M,w \semrel \psi$;

\item $M,w \semrel \Box \phi$ \ifandonlyif $\forall u \in W$, if $(w,u) \in R$, then $M,u \semrel \phi$;

\item $M \semrel  \phi$ \ifandonlyif $\forall w \in W$, $M,w \semrel \phi$.

\end{itemize}
We write $ \semrel \phi$ and say that $\phi$ is \emph{valid} \iffi for all models $M$, $M \semrel \phi$. 
\emph{G{\"o}del-L{\"o}b logic ($\logicgl$)} is defined to be the set $\logicgl \subset \lang$ of all valid formulae.
\end{definition}

As shown by Segerberg~\cite{Seg71}, the logic $\logicgl$ can be axiomatized by extending the axioms of the modal logic $\mathsf{K}$ with Löb's axiom $\Box(\Box \phi \imp \phi) \imp \Box \phi$.

\section{Labeled and Tree Sequent Systems}\label{sec:label-calc}

In this section, we review the labeled sequent calculus $\gtgl$ by Negri~\cite{Neg05} and its correspondence (proven by Gor{\'e} and Ramanayake~\cite{GorRam12}) with a notational variant of Poggiolesi's tree-hypersequent system for $\logicgl$~\cite{Pog09b}. \emph{Labeled sequents} are binary graphs of traditional Gentzen sequents, which encode the relational semantics of a logic directly in the syntax of sequents. 
 The formalism of labeled sequents has been extensively studied with the inception of the formalism dating back to the work of Kanger~\cite{Kan57} and achieving its modern form in the work of Simpson~\cite{Sim94}. It has been shown that labeled sequent systems can capture sizable and diverse classes of logics in a cut-free manner while exhibiting fundamental properties such as the admissibility of various structural rules and the invertibility of rules~\cite{Sim94,Vig00,Neg05}.

By contrast, \emph{tree-hypersequents}, which are more traditionally known as \emph{nested sequents}, are trees of Gentzen sequents. 
The formalism was introduced independently by Kashima~\cite{Kas94} and Bull~\cite{Bul92}  with further influential works provided by Br{\"u}nnler~\cite{Bru09} and Poggiolesi~\cite{Pog09,Pog09b}. Such systems arose out of a call for cut-free sequent-style systems for logics not known to possess a cut-free Gentzen system, such as the tense logic $\mathsf{K_{t}}$ and the modal logic $\mathsf{S5}$. Like labeled sequents, tree-hypersequent systems exhibit fundamental admissibility and invertibility properties, having been defined for large classes of various logics such as tense logics~\cite{GorPosTiu08}, intuitionistic modal logics~\cite{Str13,Lyo21a}, and first-order non-classical logics~\cite{Fit14,LyoOrl23}.

As observed by Gor{\'e} and Ramanayake~\cite{GorRam12}, restricting labeled sequents to be trees, rather than more general, binary graphs (which may be disconnected or include cycles), yields labeled tree sequents (cf.~\cite{IshKik07}), which are a notational variant of tree-hypersequents/nested sequents. Via this observation, the authors established bi-directional proof transformations between 
labeled and tree-hypersequent proofs for $\logicgl$. Proof theoretic correspondences between labeled and nested systems for various other logics have been established in recent years as well; e.g., for tense logics~\cite{CiaLyoRamTiu21}, first-order intuitionistic logics~\cite{Lyo21thesis,Lyo21}, and intuitionistic modal logics~\cite{Lyo21a}. Recently, it was proven in a general setting that correspondences between labeled and nested systems are a product of two underlying proof transformation techniques, structural rule elimination and 
introduction, and that (Horn) labeled and nested systems tend to come in pairs, being dual to one another~\cite{LyoOst24}.

Reducing Negri's labeled sequent system to one that uses trees, as opposed to binary graphs, is the first step in establishing syntactic correspondences between the various sequent systems for $\logicgl$. As shown in the sequel, we will systematically reduce the structure of sequents in proofs: first, going from binary graphs of Gentzen sequents to trees of Gentzen sequents (this section), then from trees of Gentzen sequents to lines of Gentzen sequents (\sect~\ref{sec:linearizing}), and last from lines of Gentzen sequents to Gentzen sequents themselves, which are easily embedded in labeled sequent proofs, completing the circuit of correspondences (\sect~\ref{sec:main-theorem}). This yields correspondences between the most widely regarded sequent systems for $\logicgl$, as depicted in \fig~\ref{fig:correspondence-diagram}.

\begin{figure}[t]

\begin{center}
\begin{tabular}{c c}
\AxiomC{}
\RightLabel{$\id$}
\UnaryInfC{$\rel, \Gamma, x : p \sar x : p, \Delta$}
\DisplayProof

&

\AxiomC{ }
\RightLabel{$\irref$}
\UnaryInfC{$\rel, xRx, \Gamma \sar \Delta$}
\DisplayProof
\end{tabular}
\end{center}

\begin{center}
\begin{tabular}{c c}
\AxiomC{$\rel, xRy, yRz, xRz, \Gamma \sar \Delta$}
\RightLabel{$\trans$}
\UnaryInfC{$\rel, xRy, yRz, \Gamma \sar \Delta$}
\DisplayProof

&

\AxiomC{$\rel, \Gamma, x : \phi \sar \Delta$}
\AxiomC{$\rel, \Gamma, x : \psi \sar \Delta$}
\RightLabel{$\disl$}
\BinaryInfC{$\rel, \Gamma, x : \phi \lor \psi \sar \Delta$}
\DisplayProof
\end{tabular}
\end{center}

\begin{center}
\begin{tabular}{c c c}
\AxiomC{$\rel, \Gamma \sar x : \phi, \Delta$}
\RightLabel{$\negl$}
\UnaryInfC{$\rel, \Gamma, x : \neg \phi \sar \Delta$}
\DisplayProof

&

\AxiomC{$\rel, \Gamma, x : \phi \sar \Delta$}
\RightLabel{$\negr$}
\UnaryInfC{$\rel, \Gamma \sar x : \neg \phi, \Delta$}
\DisplayProof

&

\AxiomC{$\rel, \Gamma \sar x : \phi, x : \psi, \Delta$}
\RightLabel{$\disr$}
\UnaryInfC{$\rel, \Gamma \sar x : \phi \lor \psi, \Delta$}
\DisplayProof
\end{tabular}
\end{center}

\begin{center}
\begin{tabular}{c c}
\AxiomC{$\rel, xRy, \Gamma, x : \Box \phi, y : \phi \sar \Delta$}
\RightLabel{$\boxl$}
\UnaryInfC{$\rel, xRy, \Gamma, x : \Box \phi \sar \Delta$}
\DisplayProof

&

\AxiomC{$\rel, xRy, \Gamma, y : \Box \phi \sar y : \phi, \Delta$}
\RightLabel{$\boxr^{\dag}$}
\UnaryInfC{$\rel, \Gamma \sar x : \Box \phi, \Delta$}
\DisplayProof
\end{tabular}
\end{center}

\caption{Labeled Sequent Calculus $\gtgl$ for $\logicgl$. The $\boxr$ rule is subject to a side condition $\dag$, namely, the rule is applicable only if the label $y$ is fresh.\label{fig:lab-calc}}
\end{figure}

\subsection{Labeled Sequents}\label{subsec:lab-seq}

We let $\lab = \{x, y, z, \ldots\}$ be a countably infinite set of \emph{labels}, define a \emph{relational atom} to be an expression of the form $xRy$ with $x,y \in \lab$, and define a \emph{labeled formula} to be an expression of the form $x : \phi$ such that $x \in \lab$ and $\phi \in \lang$. We use upper-case Greek letters $\Gamma, \Delta, \Sigma, \ldots$ to denote finite multisets of labeled formulae. For a set $\rel$ of relational atoms and multiset $\Gamma$ of labeled formulae, we let $\lab(\rel)$, $\lab(\Gamma)$, and $\lab(\rel,\Gamma)$ be the sets of all labels occurring therein. For a multiset $\Gamma$ of labeled formulae, we define the multiset $\Gamma(x) := \{\phi \ | \ x : \phi \in \Gamma\}$, for a multiset of formulae $\Gamma := \phi_{1}, \ldots, \phi_{n}$, we define $x : \Gamma := x : \phi_{1}, \ldots, x : \phi_{n}$, and for multisets $\Gamma$ and $\Delta$ of labeled formulae, we let $\Gamma, \Delta$ denote the multiset union of the two. We define a \emph{labeled sequent} to be an expression of the form $\rel, \Gamma \sar \Delta$ with $\rel$ a set of relational atoms and $\Gamma, \Delta$ a multiset of labeled formulae. Given a labeled sequent $\rel, \Gamma \sar \Delta$, we refer to $\rel, \Gamma$ as the \emph{antecedent} and $\Delta$ as the \emph{consequent}. 
Below, we clarify the interpretation of labeled sequents by explaining their evaluation over models.

\begin{definition}[Labeled Sequent Semantics] Let $M = (W,R,V)$ be a model. We define an \emph{$M$-assignment} to be a function $\assign \colon \lab \to W$. A labeled sequent $\rel, \Gamma \sar \Delta$ is \emph{satisfied} on $M$ with $M$-assignment $\assign$ \iffi if for all $xRy \in \rel$ and $x : \phi \in \Gamma$, $(\assign(x),\assign(y)) \in R$ and $M, \assign(x) \models \phi$, then there exists a $y : \psi \in \Delta$ such that $M, \assign(y) \models \psi$. A labeled sequent is defined to be \emph{valid} \iffi it is satisfied on all models $M$ with all $M$-assignments; a labeled sequent is defined to be \emph{invalid} otherwise. 
\end{definition}

\begin{figure}[t]

\begin{center}
\begin{tabular}{c c c}
\AxiomC{$\rel, \Gamma \sar \Delta$}
\RightLabel{$\lsub$}
\UnaryInfC{$\rel(x/y), \Gamma(x/y) \sar \Delta(x/y)$}
\DisplayProof

&

\AxiomC{$\rel, \Gamma \sar \Delta$}
\RightLabel{$\wk$}
\UnaryInfC{$\rel, \rel', \Gamma, \Gamma' \sar \Delta, \Delta'$}
\DisplayProof

&

\AxiomC{$\rel, \Gamma, x : \phi, x : \phi \sar \Delta$}
\RightLabel{$\ctrl$}
\UnaryInfC{$\rel, \Gamma, x : \phi \sar \Delta$}
\DisplayProof
\end{tabular}
\end{center}

\begin{center}
\begin{tabular}{c c}
\AxiomC{$\rel, \Gamma \sar x : \phi, x : \phi, \Delta$}
\RightLabel{$\ctrr$}
\UnaryInfC{$\rel, \Gamma \sar x : \phi, \Delta$}
\DisplayProof

&

\AxiomC{$\rel, \Gamma \sar x : \phi, \Delta$}
\AxiomC{$\rel, \Gamma, x : \phi \sar \Delta$}
\RightLabel{$\cut$}
\BinaryInfC{$\rel, \Gamma \sar \Delta$}
\DisplayProof
\end{tabular}
\end{center}

\caption{Admissible rules.\label{fig:admiss-rules-lab}}
\end{figure}

Negri's labeled sequent calculus $\gtgl$ (adapted to our signature) is shown in \fig~\ref{fig:lab-calc}. The labeled calculus consists of two initial rules $\id$ and $\irref$. We refer to the conclusion of an initial rule as an \emph{initial sequent}. The $\trans$ rule is a structural rule that bottom-up adds transitive edges to labeled sequents, and the remaining rules form pairs of left and right logical rules, introducing complex logical formulae into either the antecedent or consequent of the rule's conclusion. We note that the $\boxr$ rule is subject to a side condition, namely, the label $y$ must be \emph{fresh} in any application of the rule, i.e., the label $y$ is forbidden to occur in the conclusion. We refer to the distinguished formulae in the conclusion (premises) of a rule as the \emph{principal formulae} (\emph{auxiliary formulae}, respectively). For example, $x : \Box \phi$ is principal in $\boxr$ and $xRy, x : \Box \phi, y : \phi$ are auxiliary.

A \emph{derivation} of a labeled sequent $\rel, \Gamma \sar \Delta$ is defined to be a (potentially infinite) tree whose nodes are labeled with labeled sequents such that (1) $\rel, \Gamma \sar \Delta$ is the root of the tree and (2) each parent node is the conclusion of a rule with its children the corresponding premises. 
A \emph{proof} is a finite derivation such that every leaf is an instance of an initial sequent. We use $\prf$ (potentially annotated) to denote derivations and proofs throughout the remainder of the paper, and use this notation to denote derivations and proofs in other systems as well with the context determining the usage. The \emph{height} of a proof is defined as usual to be equal to the length of a maximal path from the root of the proof to an initial sequent.

\begin{remark}\label{rem:global-freshness}
We assume w.l.o.g. that every fresh variable used in a proof is \emph{globally fresh}, meaning there is a one-to-one correspondence between $\boxr$ applications and their fresh variables. This assumption is helpful, yet benign (cf.~\cite{Neg05}). 
\end{remark}

As shown by Negri~\cite{Neg05}, the various rules displayed in \fig~\ref{fig:admiss-rules-lab} are \emph{admissible} in $\gtgl$. Note that the $\lsub$ rule applies a \emph{label substitution} to the premise which replaces every occurrence of the label $y$ in a relational atom or labeled formula by $x$. We define a rule to be \emph{admissible} (\emph{height-preserving admissible}) \iffi if the premises of the rule have proofs (of height $h_{1}, \ldots, h_{n}$), then the conclusion of the rule has a proof (of height $h \leq \max\{h_{1}, \ldots, h_{n}\}$). We refer to a height-preserving admissible rule as \emph{hp-admissible}. Moreover, the non-initial rules of $\gtgl$ are \emph{height-preserving invertible}. If we let $\ru^{-1}_{i}$ be the $i$-inverse of the rule $\ru$ whose conclusion is the $i^{th}$ premise of the $n$-ary rule $\ru$ and premise is the conclusion of $\ru$, then we say that $\ru$ is \emph{(height-preserving) invertible} \iffi $\ru^{-1}_{i}$ is (height-preserving) admissible for each $1 \leq i \leq n$. We refer to height-preserving invertible rules as \emph{hp-invertible}.\label{def:i-inverse} The following theorem is due to Negri~\cite{Neg05}.

\begin{theorem}[$\gtgl$ Properties~\cite{Neg05}]\label{thm:lab-seq-properties} The labeled sequent calculus $\gtgl$ satisfies: 
\begin{description}

\item[(1)] Each labeled sequent of the form $\rel, \Gamma, x : \phi \sar x : \phi, \Delta$ is provable in $\gtgl$;

\item[(2)] All non-initial rules are hp-invertible in $\gtgl$;

\item[(3)] The $\lsub$, $\wk$, $\ctrl$, and $\ctrr$ rules are hp-admissible in $\gtgl$;

\item[(4)] The $\cut$ rule is admissible in $\gtgl$;

\item[(5)] $\phi$ is valid \iffi $\sar x : \phi$ is provable in $\gtgl$.

\end{description}
\end{theorem}

\subsection{Labeled Tree Sequents}

A set $\tel$ of relational atoms is a \emph{tree} \iffi the graph $G(\tel) = (V,E)$ forms a tree, where $V = \{x \ | \ x \in \lab(\tel)\}$ and $E = \{(x,y) \ | \ xRy \in \tel\}$. 
A \emph{tree sequent} is defined to be an expression of the form $\tel, \Gamma \sar \Delta$ such that (1) $\tel$ is a tree, (2) if $\tel \neq \emptyset$, then $\lab(\Gamma, \Delta) \subseteq \lab(\tel)$, and (3) if $\tel = \emptyset$, then $|\lab(\Gamma,\Delta)| = 1$, i.e. all labeled formulae in $\Gamma, \Delta$ share the same label. We note that conditions (1)--(3) ensure that each tree sequent forms a connected graph that is indeed of a tree shape. We use $\tseq$ and annotated versions thereof to denote tree sequents. We define a \emph{flat sequent} to be a tree sequent of the form $\Gamma \sar \Delta$, that is, a flat sequent is a sequent $\Gamma \sar \Delta$ without relational atoms and where every labeled formula in $\Gamma,\Delta$ shares the same label. The \emph{root} of a tree sequent $\tel, \Gamma \sar \Delta$ is the label $x$ such that there exists a unique directed path of relational atoms in $\tel$ from $x$ to every other label $y \in \lab(\tel, \Gamma,\Delta)$; if $\tel = \emptyset$, then the root is the single label $x$ shared by all formulae in $\Gamma,\Delta$.

Every tree sequent encodes a tree whose vertices are flat sequents. In other words, each tree sequent $\tseq = \tel, xRy_{1}, \ldots, xRy_{n}, \Gamma \sar \Delta$ such that $x$ is the root and $y_{1}, \ldots, y_{n}$ are all children of $x$ can be graphically depicted as a tree $tr_{x}(\tseq)$ of the form shown below:
\begin{center}
\begin{tabular}{c c c}
\xymatrix@C+=-3.5em@R=1.5em{
		&  \overset{x}{\boxed{x : \Gamma(x) \sar x : \Delta(x)}}\ar@{->}[dl]\ar@{->}[dr] &   		\\
 tr_{y_{1}}(\tel, \Gamma \sar \Delta) & \hdots  & tr_{y_{n}}(\tel, \Gamma \sar \Delta)
}
\end{tabular}
\end{center}

\iffalse
A useful feature of tree sequents is that they admit interpretation as a formula, meaning we can `lift' the semantics from $\logicgl$ formulae to sequents. Before defining the formula interpretation of tree sequents, let us define a useful notion: if $\tel$ is a tree such that $x \in \lab(\tel)$, then we let $\tel_{x}$ be the \emph{sub-tree} of $\tel$ rooted at $x$; we note that if $x$ is a leaf, then $\tel_{x} = \emptyset$. Given a tree sequent $\tel, \Gamma \sar \Delta$ with root $x \in \lab(\tel, \Gamma, \Delta)$ we define its formula interpretation $\formint_{x}(\tel, \Gamma \sar \Delta)$ recursively on the structure of $\tel$ as follows: 
\begin{flushleft}
$\bullet$ $\formint_{x}(\Gamma \sar \Delta) := \displaystyle{\bigwedge \Gamma(x) \rightarrow \bigvee \Delta(x)}$\\
$\bullet$ $\formint_{x}(\tel, xRy_{1}, \ldots, xRy_{n}, \Gamma \sar \Delta) := \displaystyle{\bigwege \Gamma(x) \rightarrow \bigvee \Delta(x) \lor \!\! \bigvee_{1 \leq i \leq n} \Box \formint_{y}(\tel_{y_{i}}, \Gamma \sar \Delta) }$
\end{flushleft}
A tree sequent $\tel, \Gamma \sar \Delta$ with root $x$ is \emph{(in)valid} \iffi $\formint_{x}(\tel, \Gamma \sar \Delta)$ is (in)valid. 
\fi

\begin{definition}[Tree Sequent Calculus $\treegl$] We define $\treegl := (\gtgl \setminus \{\irref,\trans\}) \cup \{\fourru\}$, where the $\fourru$ is shown below and the rules of the calculus only operate over tree sequents.
\begin{center}
\AxiomC{$\tel, xRy, \Gamma, x : \Box \phi, y : \Box \phi \sar \Delta$}
\RightLabel{$\fourru$}
\UnaryInfC{$\tel, xRy, \Gamma, x : \Box \phi \sar \Delta$}
\DisplayProof
\end{center}
\end{definition}

The system $\treegl$ is a notational variant of Poggiolesi's eponymous tree-hypersequent system~\cite{Pog09b}, and thus, we identify the two systems with one another. The main difference between the two systems is notational: the system defined above uses tree sequents, which are tree-hypersequents `dressed' as labeled sequents~\cite{GorRam12}. 


\begin{remark} Derivations, proofs, the height of a proof, (hp-)admissibility, the $i$-inverse of a rule, and (hp-)invertibility are defined for $\treegl$ analogously to how such notions are defined for $\gtgl$. We will apply these terms and concepts in the expected way to other sequent systems as well to avoid repeating similar definitions.
\end{remark}

The system $\treegl$ differs from $\gtgl$ in that $\treegl$ only allows for tree sequents in proofs, lacks the structural rules $\irref$ and $\trans$, and includes the $\fourru$ rule. We refer to $\fourru$ and $\boxlt$ as \emph{propagation rules} (cf.~\cite{Fit72,CasCerGasHer97}) since the rules bottom-up propagate data forward along relational atoms, and we refer to $\negl$, $\negr$, $\disl$, and $\disr$ as \emph{local rules} since they only affect formulae locally at a single label. For any rule, we call the label $x$ labeling the principal formula in the conclusion the \emph{principal label}, for the $\fourru$, $\boxl$, and $\boxr$ rules, we refer to the label $y$ labeling the auxiliary formula(e) in the premise(s) as the \emph{auxiliary label}, and for local rules, the \emph{auxiliary label} is taken to be the same as the principal label since they are identical. Note that we define a label $x$ in a tree sequent $\tel, \Gamma \sar \Delta$ to be a \emph{leaf} \iffi $x$ is a leaf in $\tel$, and we define a label $x$ to be a \emph{pre-leaf} \iffi for all $y \in \lab(\tel)$, if $xRy \in \tel$, then $y$ is a leaf.


Since the tree sequent calculus $\treegl$ is isomorphic to Poggiolesi's tree-hypersequent system, the two systems share the same properties. 
 We note that in the setting of tree sequents the $\lsub$ and $\wk$ rules are less general than for labeled sequents. In particular, such rules are assumed to preserve the `tree shape' of tree sequents when applied. Nevertheless, the restricted forms of these rules are still hp-admissible in $\treegl$.

\begin{theorem}[$\treegl$ Properties~\cite{Pog09b}]\label{thm:tree-seq-properties} The tree sequent calculus $\treegl$ satisfies the following:
\begin{description}

\item[(1)] Each tree sequent of the form $\tel, \Gamma, x : \phi \sar x : \phi, \Delta$ is provable in $\treegl$;

\item[(2)] All non-initial rules are hp-invertible in $\treegl$;

\item[(3)] The $\lsub$, $\wk$, $\ctrl$, and $\ctrr$ rules are hp-admissible in $\treegl$;

\item[(4)] The $\cut$ rule is admissible in $\treegl$;

\item[(5)] $\phi$ is valid \iffi $\sar x : \phi$ is provable in $\treegl$.

\end{description}
\end{theorem}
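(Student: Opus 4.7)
The plan is to treat $\treegl$ as a standard sequent-style calculus and prove the five clauses using induction arguments that parallel those available for $\gtgl$ (Theorem~\ref{thm:lab-seq-properties}), while carefully preserving the extra invariant that every sequent appearing in a proof remains a tree sequent. I would establish the clauses in the order (1), (3), (2), (4), (5): the generalised identity first, then hp-admissibility of structural rules (which is needed to prove invertibility cleanly), then hp-invertibility, then cut admissibility, and finally soundness and completeness.

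Clauses (1), (3), and (2) are essentially routine. For (1), induction on the complexity of $\phi$ suffices: the base cases $\phi = p$ and $\phi = \Box\psi$ are immediate from $\idi$ and $\idii$; the negation case reduces via $\negl$ and $\negr$; and the disjunction case uses $\disl$, $\disr$, and the induction hypothesis applied to each disjunct. For (3), I would prove hp-admissibility of $\lsub$, $\wk$, $\ctrl$, and $\ctrr$ simultaneously by induction on proof height, restricting $\lsub$ and $\wk$ to forms that preserve tree shape (e.g., $\lsub$ merging a leaf into its parent, $\wk$ adding labeled formulae at existing labels or a single fresh child edge). The $\boxrt$ case requires a fresh-variable renaming in the premise when necessary, which is harmless by Remark~\ref{rem:global-freshness}. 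Clause (2) then follows by induction on height, using hp-admissibility of $\wk$ to handle the cases where the principal formula of the rule being inverted is not principal in the last applied rule.

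The main obstacle is cut admissibility, clause (4). I would run the usual double induction, primary on the complexity of the cut formula and secondary on the sum of the heights of the two cut premises. Logical cases (when the cut formula is principal on both sides of the cut) reduce to cuts on strict subformulae, with hp-admissibility of contraction from (3) managing duplicated contexts. The delicate case is when the cut formula is $\Box\phi$, principal on the left in $\fourru$ or $\boxlt$ and on the right in $\boxrt$: unlike in $\gtgl$, where $\trans$ and $\irref$ assist, the propagation here must be handled internally via $\fourru$. One introduces a nested cut both on $\Box\phi$ itself (of equal complexity but strictly smaller summed height after hp-inversion of $\boxrt$) and on $\phi$ (of strictly smaller complexity), and then must verify that each intermediate sequent remains a tree sequent---this bookkeeping is what distinguishes the tree-sequent setting from the labeled setting and is the step I expect to require the most care.

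Finally, for clause (5), soundness is a routine induction on proof height using the formula interpretation of tree sequents that reads each parent-to-child relational atom as a $\Box$ modality on the implicational reading of the sub-sequent; one checks that each $\treegl$ rule preserves validity under this interpretation. Completeness can be obtained either directly, by translating an axiomatic proof of $\logicgl$ rule-by-rule into $\treegl$ (with L\"ob's axiom requiring the essential interplay of $\fourru$ and $\boxrt$, and $\cut$ admissibility from (4) used to mimic modus ponens), or, more economically, by invoking the correspondence with $\gtgl$ from~\cite{GorRam12} together with Theorem~\ref{thm:lab-seq-properties}(5).
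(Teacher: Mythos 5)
The paper does not prove this theorem at all: it imports it wholesale from Poggiolesi~\cite{Pog09b}, on the grounds that $\treegl$ is a notational variant of her tree-hypersequent calculus (with the further remark that replacing her binary $\boxl'$ by the unary $\boxl$ is benign). Your direct proof is therefore a genuinely different route, and for clauses (1), (2), (3), and (5) it is the standard and correct one; your observation that $\lsub$ and $\wk$ must be restricted to tree-shape-preserving instances matches the paper's own caveat, and your ``economical'' route to (5) via the $\gtgl$/$\treegl$ correspondence and Theorem~\ref{thm:lab-seq-properties}(5) is non-circular, since that correspondence is established by structural-rule elimination and does not presuppose the present theorem.

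The one place where your sketch has a real gap is clause (4). You assert that the critical case --- $\Box\phi$ principal on the right via $\boxrt$ and on the left via $\fourru$/$\boxlt$ --- reduces to ``a nested cut on $\Box\phi$ itself of equal complexity but strictly smaller summed height after hp-inversion of $\boxrt$.'' That is exactly the step that fails naively for G\"odel--L\"ob logic. The premise of $\boxrt$ carries the diagonal formula $y:\Box\phi$ on the left at the fresh label $y$; after instantiating $y$ to the auxiliary label of the left rule and cutting on $\phi$, you are left with a \emph{new} occurrence of $\Box\phi$ on the left at that label, and the cut required to remove it has the same complexity while the derivation you must cut against (something proving $\Box\phi$ on the right at that deeper label) is not one of your two original premises, so its height is not controlled by the secondary induction. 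This is the Valentini-problem phenomenon; the published proofs for GL (Poggiolesi's for the tree-hypersequent system, and the repaired Gentzen-level argument in~\cite{GorRam12b}) need either an additional induction parameter or a more carefully engineered reduction exploiting the tree structure and the freshness of $y$. Your measure of (cut-formula complexity, sum of heights) is not by itself justified, and this is precisely where informal cut-elimination arguments for provability logic have historically gone wrong. Either supply the extra parameter and verify termination of the reduction, or do what the paper does and cite~\cite{Pog09b}.
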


Proofs in $\gtgl$ and $\treegl$ are inter-translatable with one another. This correspondence follows from the work of Gor{\'e} and Ramanayake~\cite{GorRam12} and is based on a couple of observations. First, the $\irref$ rule does not occur in $\gtgl$ proofs where the end sequent is a tree sequent. It is not difficult to see why this is the case: if one takes a proof of a tree sequent, then bottom-up applications of rules from $\gtgl$ will not allow for directed cycles to enter a sequent in a proof. This follows from the fact that the conclusion of the proof is a tree sequent, which is free of directed cycles, and each rule of $\gtgl$ either preserves relational atoms bottom-up, adds a single relational atom from a label $x$ to a fresh label $y$ in the case of the $\boxrt$ rule, or adds an undirected cycle in the case of $\trans$. Since the conclusion of $\irref$ contains a directed cycle $xRx$, such a sequent will never occur in such a proof.

\begin{obs}[\cite{GorRam12}]\label{obs:irref-admiss}
The $\irref$ rule does not occur in any $\gtgl$ proof of a tree sequent.
\end{obs}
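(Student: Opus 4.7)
The plan is to prove a strengthening by induction on the distance from the root of the proof: for every sequent $\rel, \Gamma \sar \Delta$ occurring in a $\gtgl$ derivation whose end sequent is a tree sequent, the set $\rel$ of relational atoms is contained in the transitive closure of some tree $T$ of relational atoms. Since trees are directed acyclic graphs, their transitive closure contains no self-loop, so no sequent appearing in the derivation can contain an atom of the form $xRx$. Because the conclusion of $\irref$ requires precisely such an atom, $\irref$ cannot appear in any $\gtgl$ proof of a tree sequent, establishing the observation.

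For the base case, the end sequent is itself a tree sequent $\tel, \Gamma \sar \Delta$, and the invariant holds trivially with $T := \tel$. For the inductive step, I would examine each rule of $\gtgl$ read from conclusion to premise. The initial rules, the local logical rules $\negl$, $\negr$, $\disl$, $\disr$, and $\boxl$ all leave the set of relational atoms unchanged, so the same tree $T$ witnesses the invariant for the premises. The $\boxr$ rule adds $xRy$ to its premise with $y$ fresh (invoking \cref{rem:global-freshness}); extending $T$ by $\{xRy\}$ keeps it a tree, since $y$ is a new vertex attached as a leaf beneath $x$, and the premise's relational atoms lie in the transitive closure of this extended tree. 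For $\trans$, the premise adds $xRz$ given that $xRy$ and $yRz$ appear in the conclusion; these lie in the transitive closure of $T$ by the inductive hypothesis, so $xRz$ does as well by transitivity, and the same $T$ suffices for the premise.

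The main subtlety is identifying the correct invariant: one cannot require the relational atoms themselves to form a tree, since $\trans$ bottom-up introduces shortcut edges that break the tree shape, but must instead allow them to lie within the transitive closure of an underlying tree. Once this formulation is in place, the arguments for each rule are routine, and the desired conclusion follows immediately from the acyclicity of trees together with the shape of the $\irref$ rule.
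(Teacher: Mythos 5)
Your proposal is correct and takes essentially the same route as the paper: a bottom-up induction over the proof showing that the relational atoms of every sequent remain free of directed cycles (hence of any self-loop $xRx$), with the key observations being that $\boxr$ only adds an edge to a fresh label and $\trans$ only adds a transitive shortcut. The paper states the invariant simply as ``no directed cycles'' rather than ``contained in the transitive closure of an underlying tree,'' but the rule-by-rule analysis and the final step (the conclusion of $\irref$ requires $xRx$) are the same.
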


Second, Gor{\'e} and Ramanayake~\cite{GorRam12} show that instances of $\trans$ can be eliminated from labeled proofs not containing $\irref$ and replaced by instances of $\fourru$. This elimination procedure can be used to map $\gtgl$ proofs such that the end sequent is a tree sequent to $\treegl$ proofs. Conversely, 
it can be shown that $\fourru$ can be eliminated from $\treegl$ proofs, yielding $\gtgl$ proofs as a consequence. 
These elimination results can be proven syntactically, showing that proof transformations exist between $\treegl$ and $\gtgl$ for proofs of tree sequents as summarized in the theorem below. We refer the reader to~\cite{GorRam12} for the details.

\begin{theorem}[\cite{GorRam12}]\label{thm:lab-tree-equiv}
A tree sequent $\tseq$ is provable in $\gtgl$ \iffi it is provable in $\treegl$.
\end{theorem}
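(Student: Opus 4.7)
The plan is to prove both directions of the equivalence by syntactic proof transformations that operate locally on the structure of the given proof. The key conceptual ingredient is that $\gtgl$ and $\treegl$ agree on every logical rule and the two identity rules, differing only in (i) $\gtgl$'s structural rules $\irref$ and $\trans$ versus (ii) $\treegl$'s propagation rule $\fourru$ and its restriction to tree sequents. Hence, modulo simulating $\trans$ and $\fourru$ in terms of one another, the rules are essentially in one-to-one correspondence.

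For the direction $\treegl \vdash \tseq \Rightarrow \gtgl \vdash \tseq$, I would proceed by induction on the height of a $\treegl$ proof $\prf$. Every rule of $\treegl$ other than $\fourru$ is also a rule of $\gtgl$, so the inductive step is trivial in all but the $\fourru$ case. To simulate $\fourru$, suppose the last rule applied is
\[
\AxiomC{$\tel, xRy, \Gamma, x : \Box \phi, y : \Box \phi \sar \Delta$}
\RightLabel{$\fourru$}
\UnaryInfC{$\tel, xRy, \Gamma, x : \Box \phi \sar \Delta$}
\DisplayProof
\]
with the premise provable in $\gtgl$ by IH. Using that $\cut$ is admissible in $\gtgl$ (\cref{thm:lab-seq-properties}(4)), cut on $y : \Box \phi$ reduces the task to proving $\tel, xRy, \Gamma, x : \Box \phi \sar y : \Box \phi, \Delta$. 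This sequent admits a short $\gtgl$ derivation: apply $\boxr$ introducing a fresh $z$ with $yRz$, then $\trans$ to obtain the edge $xRz$, then $\boxl$ on $x:\Box\phi$ to deposit $z : \phi$ in the antecedent, producing an initial sequent by \cref{thm:lab-seq-properties}(1). Admissibility of cut then yields a cut-free $\gtgl$ proof.

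For the direction $\gtgl \vdash \tseq \Rightarrow \treegl \vdash \tseq$, the starting point is \cref{obs:irref-admiss}, which guarantees that no application of $\irref$ occurs in a $\gtgl$ proof of a tree sequent. The remaining task is to eliminate all applications of $\trans$ and replace them with applications of $\fourru$. I would argue by induction on the number of $\trans$ applications in $\prf$, focusing on a topmost instance with conclusion $\rel, xRy, yRz, \Gamma \sar \Delta$ and premise $\rel, xRy, yRz, xRz, \Gamma \sar \Delta$. The transitive edge $xRz$ created bottom-up is used strictly above, and its only productive use is as the side edge of some later $\boxl$ application of the form $\boxl$ on $x : \Box \phi$ that yields the auxiliary $z : \phi$. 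Every such pattern can be replaced by the composite step: apply $\fourru$ on the edge $xRy$ to propagate $x : \Box \phi$ into $y : \Box \phi$, then apply $\boxl$ with $y : \Box \phi$ along $yRz$ to yield $z : \phi$; the hp-admissible structural rules of \cref{thm:tree-seq-properties}(3) take care of any bookkeeping (e.g., duplicated boxed formulae). Since the end sequent is a tree sequent and $\irref$ never appears, this rewriting preserves the tree shape of every sequent in the transformed derivation and terminates with a valid $\treegl$ proof.

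The main obstacle is the $\gtgl \to \treegl$ direction: one has to verify carefully that every transitive edge introduced by $\trans$ only contributes to proof search via $\boxl$ applications (so that the $\fourru$-based replacement is exhaustive), and that after rewriting, the resulting derivation remains a well-formed $\treegl$ proof—in particular, that every intermediate sequent is a tree sequent. The $\treegl \to \gtgl$ direction is conceptually easier but depends on cut admissibility in $\gtgl$, which \cref{thm:lab-seq-properties}(4) supplies.
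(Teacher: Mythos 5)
Your proposal is correct and follows essentially the route the paper takes, which is to invoke \cref{obs:irref-admiss} to dispose of $\irref$ and then trade $\trans$ against $\fourru$ in both directions, deferring the details to Gor{\'e} and Ramanayake. The one place you genuinely diverge is the $\treegl \to \gtgl$ direction: the paper's cited argument eliminates $\fourru$ directly in favour of $\trans$ (a cut-free, local replacement), whereas you simulate $\fourru$ by cutting on $y : \Box\phi$ against a derivation of $\tel, xRy, \Gamma, x : \Box\phi \sar y : \Box\phi, \Delta$ built from $\boxr$, $\trans$, and $\boxl$ --- this is exactly the derivability fact recorded in Remark~\ref{rmk:box-left-premise}, and since $\cut$ is admissible by \cref{thm:lab-seq-properties}, your route is sound; it buys a shorter argument at the cost of routing through cut elimination rather than a purely structural rewrite. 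In the $\gtgl \to \treegl$ direction, your blanket claim that a transitive edge $xRz$ is \emph{only} used as the side edge of a $\boxl$ application is not true of an arbitrary $\trans$ instance, since $xRz$ can also serve as an input edge to a further $\trans$ application higher up; it is rescued precisely by your choice to process \emph{topmost} instances first, above which no $\trans$ occurs, so you should make that dependence explicit rather than present it as a general fact about transitive edges. With that caveat, and with admissible weakening absorbing the extra $y : \Box\phi$ that $\fourru$ deposits, your sketch matches the elimination procedure the theorem is resting on.
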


\section{Linearizing Tree Sequents in Proofs}\label{sec:linearizing}

We now show how to extract a \emph{linear nested sequent calculus} from $\treegl$, dubbed $\lingl$ (see \fig~\ref{fig:lin-calc}). To the best of the author's knowledge, this is the first linear nested sequent calculus for G\"odel-L\"ob provability logic. Linear nested sequents were introduced by Lellmann~\cite{Lel15} and are a finite representation of Masini's 2-sequents~\cite{Mas92}. Such systems operate over \emph{lines} of Gentzen sequents and have been used to provide cut-free systems for intermediate and modal logics~\cite{KuzLel18,Lel15,Mas92,Mas93}. 

The extraction of linear nested sequent proofs from tree sequent proofs takes place in three phases. In the first phase, we show how to transform any $\treegl$ proof into an \emph{end-active} proof, i.e., a proof such that principal and auxiliary formulae only occur at (pre-)leaves in tree sequents (cf.~\cite{Lel15}). In the second phase, we define our novel linearization technique, where we identify specific paths in tree sequents and `prune' sub-trees, yielding a linear nested sequent proof as the result. This technique is an additional contribution of this paper, and we conjecture that this technique can be used in other settings to extract linear nested sequent systems from tree sequent/nested sequent systems. In the third phase, we show how to `reshuffle' a linear nested sequent proof so that the proof proceeds in repetitive stages of local rules, propagation rules, and $\boxr$ rules, which we refer to as a proof in \emph{normal form}. This transformation is motivated by one provided in~\cite{PimLelRam19} for so-called \emph{basic nested systems}, which transforms proofs in a similar manner to extract Gentzen sequent proofs. Our transformation is distinct however as it works within the context of linear nested sequents. The simpler data structure used in linear nested sequents and the `end-active' shape of the rules in $\lingl$ simplifies the process of reshuffling proofs into normal form.

\begin{figure}[t]

\begin{center}
\begin{tabular}{c c c}
\AxiomC{$\phantom{\lnsg}$}
\RightLabel{$\id$}
\UnaryInfC{$\lnsg \sep \Gamma, p \sar p, \Delta$}
\DisplayProof

&

\AxiomC{$\lnsg \sep \Gamma, \phi \sar \Delta$}
\AxiomC{$\lnsg \sep \Gamma, \psi \sar \Delta$}
\RightLabel{$\disl$}
\BinaryInfC{$\lnsg \sep \Gamma, \phi \lor \psi \sar \Delta$}
\DisplayProof

&

\AxiomC{$\lnsg \sep \Gamma \sar \phi, \psi, \Delta$}
\RightLabel{$\disr$}
\UnaryInfC{$\lnsg \sep \Gamma \sar \phi \lor \psi, \Delta$}
\DisplayProof
\end{tabular}
\end{center}

\begin{center}
\begin{tabular}{c c}
\AxiomC{$\lnsg \sep \Gamma \sar \phi, \Delta $}
\RightLabel{$\negl$}
\UnaryInfC{$\lnsg \sep \Gamma, \neg \phi \sar \Delta $}
\DisplayProof

&



\AxiomC{$\lnsg \sep \Gamma, \phi \sar \Delta$}
\RightLabel{$\negr$}
\UnaryInfC{$\lnsg \sep \Gamma \sar \neg \phi, \Delta$}
\DisplayProof
\end{tabular}
\end{center}

\begin{center}
\begin{tabular}{c c c}
\AxiomC{$\lnsg \sep \Gamma, \Box \phi \sar \Delta \sep \Sigma, \Box \phi \sar \Pi$}
\RightLabel{$\fourru$}
\UnaryInfC{$\lnsg \sep \Gamma, \Box \phi \sar \Delta \sep \Sigma \sar \Pi$}
\DisplayProof

&

\AxiomC{$\lnsg \sep \Gamma, \Box \phi \sar \Delta \sep \Sigma, \phi \sar \Pi$}
\RightLabel{$\boxl$}
\UnaryInfC{$\lnsg \sep \Gamma, \Box \phi \sar \Delta \sep \Sigma \sar \Pi$}
\DisplayProof

&

\AxiomC{$\lnsg \sep \Gamma \sar \Delta \sep \Box \phi \sar \phi$}
\RightLabel{$\boxr$}
\UnaryInfC{$\lnsg \sep \Gamma \sar \Box \phi, \Delta$}
\DisplayProof
\end{tabular}
\end{center}

\caption{Linear Nested Sequent Calculus $\lingl$ for $\logicgl$.\label{fig:lin-calc}}
\end{figure}


A \emph{linear nested sequent} is an expression of the form $\lnsg := \Gamma_{1} \sar \Delta_{1} \sep \cdots \sep \Gamma_{n} \sar \Delta_{n}$ such that $\Gamma_{i}$ and $\Delta_{i}$ are multisets of formulae from $\lang$ for $1 \leq i \leq n$. We use $\lnsg$, $\lnsh$, $\ldots$ to denote linear nested sequents and note that such sequents admit a formula interpretation:
$$\formint(\Gamma \sar \Delta) := \bigwedge \Gamma \imp \bigvee \Delta
\qquad
\formint(\Gamma \sar \Delta \sep \lnsg) := \bigwedge \Gamma \imp (\bigvee \Delta \lor \Box \formint(\lnsg))
$$
We define a linear nested sequent $\lnsg$ to be (in)valid \iffi $\formint(\lnsg)$ is (in)valid. The linear nested sequent calculus $\lingl$ consists of the rules shown in \fig~\ref{fig:lin-calc}. We take the $\negl$, $\negr$, $\disl$, and $\disr$ rules to be \emph{local rules} and the $\fourru$ and $\boxlt$ rules to be \emph{propagation rules} in $\lingl$. 
 For $1 \leq i \leq n$, we refer to $\Gamma_{i} \sar \Delta_{i}$ as the $i$-component (or, as a \emph{component} more generally) of the linear nested sequent $\lnsg = \Gamma_{1} \sar \Delta_{1} \sep \cdots \sep \Gamma_{n} \sar \Delta_{n}$, we refer to $\Gamma_{n} \sar \Delta_{n}$ as the \emph{end component}, and we define the \emph{length} of $\lnsg$ to be $\len{\lnsg} := n$, i.e., the length of a linear nested sequent is equal to the number of its components. Comparing $\lingl$ to $\treegl$, one can see that $\lingl$ is the calculus $\treegl$ restricted to lines of Gentzen sequents and where rules only operate in the last two components. Making use of the formula translation, it is a basic exercise to show that if the conclusion of any rule is invalid, then at least one premise is invalid, i.e., $\lingl$ is sound.

\begin{theorem}\label{thm:soundness-lns}
If $\lnsg$ is provable in $\lingl$, then $\lnsg$ is valid.
\end{theorem}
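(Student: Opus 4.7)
The plan is to prove the theorem by induction on the height of the input proof $\prf$ in $\lingl$. Throughout, I will use the contrapositive formulation: if the conclusion of a rule is invalid, then at least one premise is invalid. Combined with the base case that initial sequents $\idi$ and $\idii$ are obviously valid under the formula interpretation $\formint$ (since $\phi \imp \phi$ is a propositional tautology and $\formint$ only prepends implications/conjunctions/disjunctions on top of such tautologies), this yields the result.

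For the inductive step, the local rules $\disl$, $\disr$, $\negl$, $\negr$ only rearrange propositional material inside the end component, so their soundness reduces to propositional reasoning about $\formint$ applied to that component (with the common prefix $\lnsg$ carried through by the outer induction on the structure of $\formint$). The interesting cases are the modal rules, and here the proof proceeds by tracing the refutation path through the linear structure: if $\formint(\lnsh \sep \Gamma' \sar \Delta')$ fails in some model $M$ at a world $w_0$, then unpacking the recursive clause of $\formint$ yields a sequence $w_0 R w_1 R \cdots R w_n$ of worlds, one per component of the linear nested sequent, such that $M, w_i$ verifies every antecedent formula and falsifies every consequent formula of the $i$-th component.

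For $\boxlt$, if the conclusion $\lnsg \sep \Gamma, \Box \phi \sar \Delta \sep \Sigma \sar \Pi$ is refuted along such a path with witnesses $\ldots, w_{n-1}, w_n$, then $w_{n-1} \semrel \Box \phi$ and $w_{n-1} R w_n$ force $w_n \semrel \phi$, which refutes the premise. For $\fourru$, the same path shows $w_{n-1} \semrel \Box \phi$, and transitivity of $R$ gives $w_n \semrel \Box \phi$ as well (every $R$-successor of $w_n$ is an $R$-successor of $w_{n-1}$), again refuting the premise. The main obstacle is the $\boxr$ rule, which requires the full force of $\logicgl$'s frame conditions. Given a refutation of the conclusion $\lnsg \sep \Gamma \sar \Box \phi, \Delta$ along the path $w_0, \ldots, w_n$, the falsity of $\Box \phi$ at $w_n$ provides some $R$-successor $v$ of $w_n$ with $v \not\semrel \phi$. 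The standard L\"obian trick is to replace $v$ by an $R$-maximal element $w_{n+1}$ of the set $\{u : w_n R u \text{ and } u \not\semrel \phi\}$; such a maximum exists because $R$ is conversely-wellfounded. By transitivity, any $R$-successor $u$ of $w_{n+1}$ is also an $R$-successor of $w_n$, and by maximality of $w_{n+1}$ we must have $u \semrel \phi$, hence $w_{n+1} \semrel \Box \phi$ while $w_{n+1} \not\semrel \phi$. This refutes the premise $\lnsg \sep \Gamma \sar \Delta \sep \Box \phi \sar \phi$ along the extended path $w_0, \ldots, w_n, w_{n+1}$. With these cases settled, the induction closes and soundness follows.
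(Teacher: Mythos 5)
Your proposal is correct and follows essentially the same route as the paper, which proves soundness via the formula interpretation by checking that each rule preserves validity downwards (equivalently, that an invalid conclusion forces an invalid premise), leaving the rule-by-rule verification as "a basic exercise." Your write-up simply fills in those details, including the standard use of transitivity for $\fourru$ and of converse wellfoundedness (choosing an $R$-maximal refuting successor) for $\boxr$, all of which are carried out correctly.
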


When transforming $\treegl$ proofs into $\lingl$ proofs later on, it will be helpful to use the weakening rule $\wk$ shown in the lemma below. Observe that any application of $\wk$ to $\id$ yields an initial sequent, and $\wk$ permutes above every other rule of $\lingl$. As an immediate consequence, we have that $\wk$ is hp-admissible in $\lingl$.

\begin{lemma}\label{lem:wk-admiss-lns} The following weakening rule $\wk$ is hp-admissible in $\lingl$.
\begin{center}
\AxiomC{$\lnsg \sep \Gamma \sar \Delta \sep \lnsh$}
\RightLabel{$\wk$}
\UnaryInfC{$\lnsg \sep \Gamma, \Sigma \sar \Pi, \Delta \sep \lnsh$}
\DisplayProof
\end{center}
\end{lemma}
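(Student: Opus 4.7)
The plan is a straightforward induction on the height of the given proof $\prf$ of $\lnsg \sep \Gamma \sar \Delta \sep \lnsh$. For the base case, any height-one proof is an instance of $\idi$ or $\idii$; since these initial rules only require some atom $p$ or some boxed formula $\Box \phi$ to occur on both sides of some component, adding arbitrary multisets $\Sigma, \Pi$ to the antecedent and consequent of the chosen component leaves the rule instance intact, so the weakened sequent is still an initial sequent.

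For the inductive step, let $\ru$ be the last rule applied in $\prf$, and split into cases depending on whether the weakened component $\Gamma \sar \Delta$ coincides with a component affected by $\ru$. If the weakened component lies strictly inside $\lnsg$ or $\lnsh$, then $\ru$ acts elsewhere (every rule of $\lingl$ acts only on the last one or two components), so the weakening carries through unchanged: apply the IH to the premise(s) of $\ru$ at the same component and reapply $\ru$, obtaining a proof whose height is at most that of $\prf$.

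If the weakened component coincides with one of the components modified by $\ru$, the rule is still applicable after adding $\Sigma, \Pi$ to the surrounding context, since every rule of $\lingl$ operates on designated principal or auxiliary formulae while preserving arbitrary surrounding context. Concretely: for the local rules $\negl, \negr, \disl, \disr$, the added $\Sigma, \Pi$ simply merge into the context of the premise(s); for the propagation rules $\fourru$ and $\boxl$, the added formulae can be placed in either of the two active components without disturbing the propagated $\Box \phi$ (or $\phi$); and for $\boxr$, weakening applied to the conclusion component $\Gamma \sar \Box \phi, \Delta$ is lifted to the corresponding (now second-to-last) component of the premise $\lnsg \sep \Gamma \sar \Delta \sep \Box \phi \sar \phi$, so the freshly introduced component $\Box \phi \sar \phi$ is never itself weakened and the rule structure is preserved. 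In each sub-case, invoking the IH on the premise(s) and reapplying $\ru$ produces a proof of the weakened conclusion of height no greater than the height of $\prf$.

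The only real bookkeeping obstacle is the $\boxr$ case, where the conclusion has one fewer component than the premise, so the index of the weakened component must be shifted appropriately when tracing it into the premise; once this is handled, the argument reduces to a routine rule-by-rule check, giving hp-admissibility of $\wk$ as claimed.
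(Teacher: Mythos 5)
Your proof is correct and follows essentially the same route as the paper, which simply observes that applying $\wk$ to an instance of $\idi$ or $\idii$ yields another initial sequent and that $\wk$ permutes above every non-initial rule of $\lingl$ (including the index shift in the $\boxr$ case), then concludes hp-admissibility by induction on proof height. Your write-up just makes the rule-by-rule check explicit.
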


Furthermore, we have that the $\fourru$ and $\boxlt$ rules are hp-invertible in $\lingl$ since the premises of each rule may be obtained from the conclusion by $\wk$.


\begin{lemma}\label{lem:hp-invert-lingl}
The $\fourru$ and $\boxlt$ rules are hp-invertible in $\lingl$.
\end{lemma}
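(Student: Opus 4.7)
The plan is to exploit the hp-admissibility of the weakening rule $\wk$ established in \cref{lem:wk-admiss-lns}. The key observation is that, in each of the two rules $\fourru$ and $\boxlt$, the premise differs from the conclusion only by the addition of a single formula to the antecedent of the end component of the linear nested sequent: for $\fourru$, one adds $\Box \phi$ to the antecedent of the end component, and for $\boxlt$, one adds $\phi$ to the antecedent of the end component. In both cases, this modification is precisely an instance of the weakening rule $\wk$ applied to the end component (with $\lnsh$ empty, $\Sigma$ as the antecedent, and $\Pi$ as the consequent of that component).

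Concretely, I would proceed as follows. Suppose $\prf$ is a $\lingl$ proof of the conclusion $\lnsg \sep \Gamma, \Box \phi \sar \Delta \sep \Sigma \sar \Pi$ of the $\fourru$ rule of height $h$. By \cref{lem:wk-admiss-lns}, hp-admissibility of $\wk$ yields a proof $\prf'$ of $\lnsg \sep \Gamma, \Box \phi \sar \Delta \sep \Sigma, \Box \phi \sar \Pi$ of height at most $h$, which is exactly the premise of the $\fourru$ rule. This establishes hp-invertibility of $\fourru$. An entirely analogous argument works for $\boxlt$: if $\prf$ is a proof of $\lnsg \sep \Gamma, \Box \phi \sar \Delta \sep \Sigma \sar \Pi$ of height $h$, then applying the hp-admissible $\wk$ to add $\phi$ to the antecedent of the end component produces a proof of $\lnsg \sep \Gamma, \Box \phi \sar \Delta \sep \Sigma, \phi \sar \Pi$ of height at most $h$.

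There is no real obstacle here, since the heavy lifting has already been done in \cref{lem:wk-admiss-lns}. The only subtlety is to check that the weakening rule, as formulated, permits weakening a single component in the middle or at the end of the linear nested sequent (which it does, by taking the appropriate instantiation of the schematic $\lnsg$ and $\lnsh$ in the statement of $\wk$). Thus the proof is a short, two-case application of hp-admissibility of weakening.
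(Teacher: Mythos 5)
Your proposal is correct and is exactly the paper's argument: the paper justifies the lemma in one line by observing that the premise of each of $\fourru$ and $\boxlt$ is obtained from its conclusion by weakening a single formula into the antecedent of the end component, and then invoking the hp-admissibility of $\wk$ from \cref{lem:wk-admiss-lns}. Your additional check that the schematic form of $\wk$ permits weakening the end component is a sensible detail, but the approach and substance coincide with the paper's proof.
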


\paragraph*{From Tree Sequents to Linear Nested Sequents}\label{subsec:tree-to-lin-seq}

To extract $\lingl$ from $\treegl$, we first establish a set of rule permutation results, i.e., we show that rules of a certain form in $\treegl$ can always be permuted below other rules of a specific form. We note that a rule $\ru$ \emph{permutes below} a rule $\ru'$ 
whenever an application of $\ru$ followed by $\ru'$ in a proof can be replaced by an application of $\ru'$ (potentially preceded by an application of an $i$-inverse of $\ru$) followed by an application of $\ru$ to derive the same conclusion. To make this definition more concrete, we show (1) the permutation of a unary rule $\ru$ below a binary rule $\ru'$ below top-left, (2) the permutation of a binary rule $\ru$ below a unary rule $\ru'$ below top-right, (3) the permutation of a unary rule $\ru$ below a unary rule $\ru'$ below bottom-left, and (4) the permutation of a binary rule $\ru$ below a binary rule $\ru'$ below bottom-right. In (1) and (4), we note that the case where $\ru$ is applied to the right premise of $\ru'$ is symmetric. Furthermore, recall that $\ru^{-1}_{1}$ and $\ru^{-1}_{2}$ are the $1$- and $2$-inverses of $\ru$, respectively. (NB. For the definition of the \emph{$i$-inverse} of a rule, see \cref{subsec:lab-seq}.) We use (annotated versions of) the symbol $\seq$ below to indicate not only tree sequents, but linear nested sequents since we consider permutations of rules in $\lingl$ later on as well. 

\begin{center}
\begin{tabular}{c c c c c c}
\AxiomC{$\seq_{0}$}
\RightLabel{$\ru$}
\UnaryInfC{$\seq_{2}$}

\AxiomC{$\seq_{1}$}
\RightLabel{$\ru'$}
\BinaryInfC{$\seq_{3}$}
\DisplayProof

&

$\leadsto$

&

\AxiomC{$\seq_{0}$}

\AxiomC{$\seq_{1}$}
\RightLabel{$\ru^{-1}$}
\UnaryInfC{$\seq'$}

\RightLabel{$\ru'$}
\BinaryInfC{$\seq$}
\RightLabel{$\ru$}
\UnaryInfC{$\seq_{3}$}
\DisplayProof

&

\AxiomC{$\seq_{0}$}

\AxiomC{$\seq_{1}$}
\RightLabel{$\ru$}
\BinaryInfC{$\seq_{2}$}

\RightLabel{$\ru'$}
\UnaryInfC{$\seq_{3}$}
\DisplayProof

&

$\leadsto$

&

\AxiomC{$\seq_{0}$}
\RightLabel{$\ru'$}
\UnaryInfC{$\seq$}

\AxiomC{$\seq_{1}$}
\RightLabel{$\ru'$}
\UnaryInfC{$\seq'$}

\RightLabel{$\ru$}
\BinaryInfC{$\seq_{3}$}
\DisplayProof
\end{tabular}
\end{center}
\begin{center}
\begin{tabular}{c c c c c c}
\AxiomC{$\seq_{0}$}
\RightLabel{$\ru$}
\UnaryInfC{$\seq_{1}$}
\RightLabel{$\ru'$}
\UnaryInfC{$\seq_{2}$}
\DisplayProof

&

$\leadsto$

&

\AxiomC{$\seq_{0}$}
\RightLabel{$\ru'$}
\UnaryInfC{$\seq$}
\RightLabel{$\ru$}
\UnaryInfC{$\seq_{2}$}
\DisplayProof

&

\AxiomC{$\seq_{0}$}
\AxiomC{$\seq_{1}$}
\RightLabel{$\ru$}
\BinaryInfC{$\seq_{2}$}

\AxiomC{$\seq_{3}$}
\RightLabel{$\ru'$}
\BinaryInfC{$\seq_{4}$}
\DisplayProof

&

$\leadsto$

&

\AxiomC{$\seq_{0}$}

\AxiomC{$\seq_{3}$}
\RightLabel{$\ru^{-1}_{1}$}
\UnaryInfC{$\seq$}
\RightLabel{$\ru'$}
\BinaryInfC{$\seq'$}

\AxiomC{$\seq_{1}$}
\AxiomC{$\seq_{3}$}
\RightLabel{$\ru^{-1}_{2}$}
\UnaryInfC{$\seq''$}

\RightLabel{$\ru'$}
\BinaryInfC{$\seq'''$}

\RightLabel{$\ru$}
\BinaryInfC{$\seq_{4}$}
\DisplayProof
\end{tabular}
\end{center}
The various admissible rule permutations we describe are based on the notion of \emph{end-activity}, which is a property of rule applications where principal and auxiliary formulae only occur at (pre-)leaves in sequents. End-activity was first discussed by Lellmann~\cite{Lel15} in the context of mapping Gentzen sequent proofs into linear nested sequent proofs for non-classical logics.

\begin{definition}[End Active] A $\treegl$ proof is \emph{end-active} \iffi the following hold:
\begin{description}

\item[(1)] The principal label in every instance of $\id$ is a leaf;

\item[(2)] The principal label of each local rule is a leaf;

\item[(3)] The principal and auxiliary labels of a propagation rule are a pre-leaf and leaf, respectively.


\end{description}
A rule $\ru$ is \emph{end-active} \iffi it satisfies its respective condition above; otherwise, the rule is \emph{non-end-active}. We note that we always take the $\boxr$ rule to be end-active. 
\end{definition}

As stated in the following lemma, the end-activity of sequential rule applications determines a set of permutation relationships between the rules of $\treegl$. The lemma is straightforward to prove, though tedious due to the number of cases; its proof can be found in the appendix. 


\begin{lemma}\label{lem:permutations} The following permutations hold in $\treegl$: 
\begin{description}




\item[(1)] If $\ru$ is a non-end-active local rule and $\ruprime$ is non-initial and end-active, then $\ru$ permutes below $\ruprime$ and $\ruprime$ remains end-active after this permutation;

\item[(2)] if $\ru$ is a non-end-active propagation rule and $\ruprime$ is non-initial and end-active, then $\ru$ permutes below $\ruprime$ and $\ruprime$ remains end-active after this permutation.
    
\end{description}
\end{lemma}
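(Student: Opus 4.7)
The proof proceeds by a case analysis on pairs $(\ru, \ruprime)$, organized by arities so that each case fits one of the four permutation templates exhibited above. The strategy is uniform across sub-cases; only combinatorial bookkeeping varies.

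The key structural observation is that $\ru$ and $\ruprime$ act on disjoint sets of labels in the underlying tree sequent. Non-end-activity of $\ru$ forces its principal label to lie strictly above the (pre-)leaves of the tree: not a leaf if $\ru$ is local, and not a pre-leaf (with a non-leaf auxiliary child) if $\ru$ is a propagation rule. By contrast, end-activity of $\ruprime$ places its principal and auxiliary labels at (pre-)leaf positions, and if $\ruprime = \boxr$ its witness label can be chosen globally fresh via \cref{rem:global-freshness}. Therefore, the formulae manipulated by the two rules reside at distinct labels and cannot obstruct one another. Moreover, since all local and propagation rules preserve the tree shape, no interference arises from the relational structure either.

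With these observations in hand, the sub-cases are constructed as follows. For unary $\ru$ and unary $\ruprime$, a direct swap suffices. For binary $\ru$ and unary $\ruprime$, we duplicate $\ruprime$ above each premise of $\ru$, invoking hp-admissibility of the label substitution rule $\lsub$ (\cref{thm:tree-seq-properties}) to rename $\boxr$'s witness in each copy so as to restore global freshness when $\ruprime = \boxr$. For unary $\ru$ and binary $\ruprime$, we first apply the hp-admissible $i$-inverse $\ru^{-1}_i$ (also by \cref{thm:tree-seq-properties}) to the premise of $\ruprime$ that does not arise from $\ru$, restoring it to its pre-$\ru$ form, and then apply $\ruprime$ followed by $\ru$. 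The binary-binary case combines both tactics. In every sub-case, direct inspection of antecedents, consequents, and tree structure confirms that the permuted derivation still concludes the original end sequent, and end-activity of $\ruprime$ in its new position is preserved because the (pre-)leaf status of its principal and auxiliary labels depends only on the tree shape of the sequent at which it is applied, and $\ru$ never alters this shape.

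The principal obstacle is combinatorial: in each sub-case one must verify that the premises required for $\ruprime$ in the permuted derivation genuinely match $\ruprime$'s schema (e.g., differ only in the relevant disjuncts in the binary case) once $\ru$'s action has been absorbed into them. This is precisely where hp-admissibility of the inverses $\ru^{-1}_i$ and of $\lsub$, together with global freshness of $\boxr$'s witness, become essential; once these technical preconditions are secured, the remaining verification is routine.
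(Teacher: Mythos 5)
Your overall strategy is the same as the paper's: a case analysis over pairs $(\ru,\ruprime)$ organized by the four permutation templates, with hp-invertibility of the non-initial rules handling the binary cases, $\lsub$/global freshness handling duplicated $\boxr$ witnesses, and the observation that local and propagation rules leave the tree shape unchanged handling preservation of end-activity. However, the load-bearing claim of your argument --- that $\ru$ and $\ruprime$ ``act on disjoint sets of labels'' because non-end-activity pushes $\ru$'s labels strictly above the (pre-)leaves while end-activity pins $\ruprime$'s labels to (pre-)leaves --- is false, and several of the cases you must check are exactly the ones where it fails. Concretely: a non-end-active local rule only has a principal label that is not a \emph{leaf}; that label may perfectly well be a \emph{pre-leaf}, and hence may coincide with the principal label of an end-active propagation rule $\ruprime$ (e.g.\ $\negl$ applied at a pre-leaf $x$ followed by $\fourru$ with principal label $x$). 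Dually, a non-end-active propagation rule only guarantees that its \emph{principal} label is not a pre-leaf; its auxiliary label may still be a leaf (when the principal label fails to be a pre-leaf on account of a \emph{different} child) and so may coincide with the principal label of an end-active local rule. In these configurations the two rules manipulate formulae at a \emph{shared} label, so disjointness of labels cannot be the reason the permutation succeeds.

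What actually makes the permutations go through --- and what the paper verifies case by case --- is the weaker fact that the principal and auxiliary \emph{formula occurrences} of the two rules are distinct, together with one genuinely label-theoretic fact: the label at which $\ru$ acts can never be the fresh witness $y$ of a $\boxr$ instance $\ruprime$. For a local $\ru$ this holds because $y$ is a leaf while $\ru$'s principal label is not; for a propagation $\ru$ it follows from the freshness side condition, since any formula deposited by $\ru$ at $y$ would force $y$ to occur in $\boxr$'s conclusion. This is the only place the permutation could actually break, and it is precisely the distinctness the paper isolates in each displayed case (``$z$ is distinct from $y$''). So your conclusion and scaffolding are sound, but the central non-interference claim needs to be weakened from disjointness of label sets to distinctness of the active formula occurrences plus the freshness argument for $\boxr$; as written, the justification does not cover the shared-label cases.
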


Using the above lemma, every proof $\prf$ in $\treegl$ of a sequent of the form $\sar x : \phi$ can be transformed into an end-active proof as follows: first, observe that the last inference in $\prf$ must be end-active since the proof ends with $\sar x : \phi$. By successively considering bottom-most instances of non-end-active local and propagation rules $\ru$ in $\prf$, we can repeatedly apply \cref{lem:permutations} to permute $\ru$ lower in the proof because all rules below $\ru$ are guaranteed to be end-active. By inspecting the rules of $\treegl$, we know that the trees within the tree sequents in $\prf$ will never `grow' but will `shrink' as they get closer to the conclusion of the proof, meaning, each non-end-active rule $\ru$ will eventually become end-active through successive downward permutations.\footnote{Observe that $\negl$, $\negr$, $\disl$, $\disr$, $\boxlt$, and $\fourru$ only affect the formulae associated with the label of a tree sequent, whereas $\boxrt$ top-down removes a relational atom from a tree sequent. Therefore, the number of relational atoms in tree sequents will decrease as we move down paths in $\treegl$ proofs from initial sequents to the conclusion.} 
This process will eventually terminate and yield a proof where all non-initial rules are end-active for the following two reasons: (1) As stated in the lemma above, permuting a non-end-active local or propagation rule $\ru'$ below an end-active rule $\ru$ preserves the end-activity of the rule $\ru$. (2) Although downward permutations may require the $i$-inverse of a rule to be applied above the permuted inferences, the hp-invertibility of all non-initial rules in $\treegl$ (see \thm~\ref{thm:tree-seq-properties}) ensures that the height of the proof does not grow after a downward rule permutation.

After all such downward permutations have been performed, the resulting proof is \emph{almost} end-active with the exception that initial rules may not be end-active. For example, as shown below left, it may be the case that $\id$ is non-end-active and followed by a rule $\ru$. Since $\ru$ is guaranteed to be end-active at this stage, we know that the auxiliary label of $\ru$ is distinct from $y$, meaning, the conclusion will be an instance of $\id$ as shown below right.
\begin{center}
\begin{tabular}{c c}
\AxiomC{}
\RightLabel{$\id$}
\UnaryInfC{$\tel, \Gamma, y : p \sar y : p, \Delta$}
\AxiomC{($\tel', \Gamma', y : p \sar y : p, \Delta'$)}
\RightLabel{$\ru$}
\BinaryInfC{$\tel'', \Gamma'', y : p  \sar y : p , \Delta''$}
\DisplayProof

&

\AxiomC{}
\RightLabel{$\id$}
\UnaryInfC{$\tel'', \Gamma'', y : p  \sar y : p , \Delta''$}
\DisplayProof
\end{tabular}
\end{center}
By replacing such rule applications $\ru$ by $\id$ instances, effectively `pushing' initial rules down in the proof, we will eventually obtain initial rules such that the label $y$ is auxiliary in the subsequent rule application, which will then be a leaf since all non-initial rules of the proof are end-active. Thus, every proof in $\treegl$ of a sequent of the form $\sar x : \phi$ can be transformed into an end-active proof. 


\begin{theorem}\label{thm:proofs-to-end-active}
Each proof in $\treegl$ of a sequent of the form $\sar x : \phi$ can be transformed into an end-active proof.
\end{theorem}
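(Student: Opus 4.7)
The plan is to iteratively rewrite the input proof $\prf$ of $\sar x : \phi$ by pushing non-end-active inferences downward until they become end-active, then cleaning up any non-end-active initial rules by a local replacement. The starting observation is that $\sar x : \phi$ is a flat sequent whose underlying tree consists of the single label $x$, so any rule whose conclusion is this sequent is automatically end-active; in particular, the final inference of $\prf$ is end-active from the outset.

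To convert local and propagation rules, I pick a bottom-most non-end-active application $\ru$ of such a rule, so that every rule strictly below $\ru$ is end-active, and invoke \cref{lem:permutations} to permute $\ru$ past the end-active rule $\ruprime$ immediately beneath it. The lemma guarantees that $\ruprime$ stays end-active after the swap, and the hp-invertibility of every non-initial rule (\cref{thm:tree-seq-properties}(2)) ensures that the $i$-inverses required in the permutation diagrams are hp-admissible, so the proof height does not grow. Since $\boxrt$ is the only rule that enlarges the underlying tree bottom-up, each time $\ru$ is permuted past a $\boxrt$ application the tree carried by $\ru$'s conclusion loses a relational atom; after finitely many such swaps the principal and auxiliary labels of $\ru$ come to lie at (pre-)leaves, and $\ru$ itself becomes end-active.

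After every local and propagation rule instance has been made end-active, the only remaining source of non-end-activity is the initial rules. If a non-end-active $\idi$ (resp.\ $\idii$) application with principal label $y$ is immediately followed by an end-active rule $\ru$, then $\ru$'s auxiliary label differs from $y$, so both occurrences of the principal formula at $y$ survive in $\ru$'s conclusion, which is then itself an instance of $\idi$ (resp.\ $\idii$). Replacing the pair (initial rule, $\ru$) with a single initial rule application whose conclusion is the conclusion of $\ru$ removes one level from the path between the initial rule and its nearest end-active descendant; iterating this replacement pushes every initial rule down until its principal label becomes a leaf.

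The main obstacle is the termination argument for the inner downward-push loop: permuting a binary rule below another binary rule duplicates the lower rule and injects $i$-inverses on sibling subproofs (as the permutation diagrams preceding \cref{lem:permutations} show), so a naive global rule count can even grow. The right measure is local, namely the number of rule applications occurring strictly below the distinguished $\ru$ in the current proof; each swap decreases this count by exactly one, while \cref{lem:permutations} preserves the end-activity of every rule already below $\ru$. Equipped with this measure, each downward push terminates; the outer count of non-end-active local and propagation rule instances strictly decreases at each outer step; and the initial-rule cleanup terminates since each replacement strictly reduces the depth of the offending initial rule.
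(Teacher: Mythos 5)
Your proposal is correct and follows essentially the same route as the paper: the final inference is end-active because the end sequent is flat, bottom-most non-end-active local and propagation rules are pushed down via \cref{lem:permutations} with hp-invertibility controlling the $i$-inverses, the shrinking of trees toward the conclusion forces each pushed rule to become end-active, and non-end-active initial rules are then absorbed downward by observing that the conclusion of the following end-active rule is itself an initial sequent. Your explicit local termination measure (the number of inferences strictly below the distinguished $\ru$) is a welcome sharpening of the paper's more informal "height does not grow and trees shrink" argument, but it does not change the structure of the proof.
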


\begin{theorem}\label{thm:tree-to-lin}
Each end-active proof in $\treegl$ can be transformed into a proof in $\lingl$.
\end{theorem}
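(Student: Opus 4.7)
The plan is to prove this by induction on the height of the end-active proof $\prf$. The key construction is a proof-driven linearization: for each tree sequent $\tseq$ appearing in $\prf$, I would identify a root-to-leaf \emph{active path} in $\tseq$ (the path terminating at the principal label of the last rule of the associated sub-proof) and translate $\tseq$ into the linear nested sequent whose components are the Gentzen sequents at the labels along this path, taken in order. Because $\prf$ is end-active, the principal label of every rule is a leaf or pre-leaf, so choosing the active path to end there makes the principal data of the last rule land in the last (or last two) components of the linearization, which is exactly where the rules of $\lingl$ operate.

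For the base case, a single instance of $\idi$ or $\idii$ has principal label a leaf, and the last component of the linearization then has the principal formula on both sides, yielding an initial $\lingl$ sequent. For the inductive step, I would case on the last rule $\ru$ of $\prf$. Local rules $(\negl, \negr, \disl, \disr)$ act at a leaf, hence at the last component, so I apply IH to the premise(s) and fire the corresponding $\lingl$ local rule. Propagation rules $(\boxlt, \fourru)$ act at a pre-leaf/leaf pair, hence at the last two components, and again translate directly by applying IH and the corresponding $\lingl$ propagation rule. For $\boxrt$, the fresh leaf $y$ in the premise becomes a new last component $\Box\phi \sar \phi$ in the premise's linearization, so the rule translates to $\lingl$'s $\boxr$.

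The main obstacle will be coping with tree sequents whose trees branch, since a pre-leaf may have multiple children (from repeated $\boxrt$ applications) while $\lingl$ is fundamentally linear. My plan is to have the linearization commit to a single active child and \emph{prune} the sibling sub-trees, absorbing their formula interpretations as additional $\Box$-formulae in the consequent of the preceding component --- this is consistent with the formula interpretation of tree sequents, which already collapses a node's children into a disjunction of $\Box$s over them. The hp-admissibility of weakening in $\lingl$ (\cref{lem:wk-admiss-lns}) then lets these absorbed side formulae ride along passively through the translated proof, and the hp-invertibility of $\fourru$ and $\boxlt$ in $\lingl$ (\cref{lem:hp-invert-lingl}) provides the flexibility to adjust the antecedents of carried components when the active path must be reselected between sub-proofs handling different branches. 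The technical heart of the argument is thus a careful choice of active path at each inductive step, chosen to be compatible with the label of the rule being translated, so that the resulting sequence of $\lingl$ rule applications remains well-formed.
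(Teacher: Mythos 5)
Your high-level strategy (induction on the height of the end-active proof, translating each tree sequent to the linear nested sequent read off a root-to-leaf path, and relying on end-activity so that the last rule's action lands in the final components) matches the paper's. However, there is a genuine gap in how the path is chosen and propagated through the induction. You propose to fix the active path of each sequent as the one terminating at the principal label of the last rule of its sub-proof. But the inductive hypothesis hands you a provable linearization of the \emph{premise} along the path determined by the premise's sub-proof, and that path may end at a completely different leaf from the principal label of the rule you are currently translating (and, for $\disl$, the two premises' paths may end at different leaves from each other). Provability of the linearization along one root-to-leaf path does not give provability along another: these are different linear nested sequents with different components, and neither weakening nor the invertibility of $\fourru$/$\boxlt$ lets you ``reselect'' the path as you suggest. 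The paper resolves exactly this tension by proving the \emph{existential} strengthening -- there exists \emph{some} root-to-leaf path whose linearization is provable -- keeping the path supplied by the IH, and then casing on whether the principal/auxiliary labels of the last rule lie on that path: if they do not, the conclusion's linearization along the \emph{same} path is literally the same sequent (or obtained by admissible weakening, as in the $\boxrt$ case where only $x$ is on the path), so the rule is translated as nothing; only when they do lie on the path is the corresponding $\lingl$ rule fired. Without this off-path/on-path case analysis your construction breaks at the first rule whose principal label is not on the inherited path.

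A second, smaller divergence: your plan to absorb pruned sibling sub-trees as extra $\Box$-formulae in the consequent of the preceding component is not needed and creates its own mismatch. The paper simply discards all off-path data. Under your absorption scheme, a $\boxrt$ step whose fresh label $y$ is off the chosen path would leave the premise's linearization carrying the interpretation of the pruned $y$-subtree while the conclusion carries $\Box\phi$ instead; these do not align via weakening alone. Dropping the off-path formulae outright, as the paper does, avoids this entirely.
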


\begin{proof} We prove that if there exists an end-active proof in $\treegl$ of a tree sequent $\tel, \Gamma \sar \Delta$, then there exists a path $x_{1}, \ldots, x_{n}$ of labels from the root $x_{1}$ to a leaf $x_{n}$ in $\tel, \Gamma \sar \Delta$ such that $\Gamma(x_{1}) \sar \Delta(x_{1}) \sep \cdots \sep \Gamma(x_{n}) \sar \Delta(x_{n})$ is provable in $\lingl$. We argue this by induction on the height of the end-active proof $\prf$ in $\treegl$.

\medskip

\noindent
\textit{Base case.} Suppose $\prf$ consists of a single application of $\id$, as shown below:
\begin{center}
\AxiomC{}
\RightLabel{$\id$}
\UnaryInfC{$\tel, \Gamma, x : p \sar x : p, \Delta$}
\DisplayProof
\end{center}
We know that each initial sequent is end-active, i.e., the label $x$ is a leaf. Therefore, since the sequent above is a tree sequent, there exists a path $y_{1}, \ldots, y_{n} = x$ of labels from the root $y_{1}$ to the leaf $x$. By using this path, we obtain an instance of $\id$ as shown below:
\begin{center}
\AxiomC{}
\RightLabel{$\id$}
\UnaryInfC{$\Gamma(y_{1}) \sar \Delta(y_{1}) \sep \cdots \sep \Gamma(x), p \sar p, \Delta(x)$}
\DisplayProof
\end{center}

\medskip

\noindent
\textit{Inductive step.} For the inductive hypothesis (IH), we assume that the claim holds for every end-active proof in $\treegl$ of height $h' \leq h$, and aim to show that the claim holds for proofs of height $h+1$. We let $\prf$ be of height $h+1$ and argue the cases where $\prf$ ends with $\boxlt$ or $\boxrt$ as the remaining cases are shown similarly. Some additional cases are given in the appendix. 



\medskip

\noindent
$\boxlt$. Let us suppose that $\prf$ ends with an instance of $\boxlt$ as shown below.
\begin{center}
\AxiomC{$\tel, xRy, \Gamma, x : \Box \phi, y : \phi \sar \Delta$}
\RightLabel{$\boxlt$}
\UnaryInfC{$\tel, xRy, \Gamma, x : \Box \phi \sar \Delta$}
\DisplayProof
\end{center}
By IH, we know there exists a path $y_{1}, \ldots, y_{n}$ of labels from the root $y_{1}$ to the leaf $y_{n}$ in the premise of $\boxlt$ such that $\lnsg = \Gamma(y_{1}) \sar \Delta(y_{1}) \sep \cdots \sep \Gamma(y_{n}) \sar \Delta(y_{n})$ is provable in $\lingl$. Since $\boxlt$ is end-active, we have three cases to consider: (1) neither $x$ nor $y$ occur along the path in the premise, (2) only $x$ occurs along the path in the premise, or (3) both $x$ and $y$ occur along the path in the premise. In cases (1) and (2), we translate the entire $\boxlt$ inference as the linear nested sequent $\lnsg$. In case (3), $\lnsg$ has the form of the premise shown below with $\Gamma(y_{n-1}) = \Sigma_{1}, \Box \phi$ and $\Gamma(y_{n}) = \Sigma_{2}, \phi$. A single application of $\boxl$ gives the desired result.
\begin{center}
\AxiomC{$\Gamma(y_{1}) \sar \Delta(y_{1}) \sep \cdots \sep \Sigma_{1}, \Box \phi \sar \Delta(y_{n-1}) \sep \Sigma_{2}, \phi \sar \Delta(y_{n})$}
\RightLabel{$\boxl$}
\UnaryInfC{$\Gamma(y_{1}) \sar \Delta(y_{1}) \sep \cdots \sep \Sigma_{1}, \Box \phi \sar \Delta(y_{n-1}) \sep \Sigma_{2} \sar \Delta(y_{n})$}
\DisplayProof
\end{center}

\medskip

\noindent
$\boxrt$. Let us suppose that $\prf$ ends with an instance of $\boxrt$ as shown below.
\begin{center}
\AxiomC{$\tel, xRy, \Gamma, y : \Box \phi \sar y : \phi, \Delta$}
\RightLabel{$\boxrt$}
\UnaryInfC{$\tel, \Gamma \sar x : \Box \phi, \Delta$}
\DisplayProof
\end{center}
By IH, we know there exists a path $y_{1}, \ldots, y_{n}$ of labels from the root $y_{1}$ to the leaf $y_{n}$ in the premise of $\boxrt$ such that $\Gamma(y_{1}) \sar \Delta(y_{1}) \sep \cdots \sep \Gamma(y_{n}) \sar \Delta(y_{n})$ is provable in $\lingl$. We have three cases to consider: either (1) neither $x$ nor $y$ occur along the path, (2) only $x$ occurs along the path, or (3) both $x$ and $y$ occur along the path. In case (1), we translate the entire $\boxrt$ instance as the single linear nested sequent $\lnsg$. In case (2), we know that $x = y_{i}$ for some $1 \leq i \leq n$. To obtain the desired conclusion, we apply the hp-admissible $\wk$ rule as shown below. Observe that the conclusion of the $\wk$ application below corresponds to the linear nested sequent obtained from the path $y_{1}, \ldots, y_{n}$ in the conclusion of the $\boxrt$ instance above.
\begin{center}
\AxiomC{$\Gamma(y_{1}) \sar \Delta(y_{1}) \sep \cdots \sep \Gamma(y_{i}) \sar \Delta(y_{i}) \sep \cdots \sep \Gamma(y_{n}) \sar \Delta(y_{n})$}
\RightLabel{$\wk$}
\UnaryInfC{$\Gamma(y_{1}) \sar \Delta(y_{1}) \sep \cdots \sep \Gamma(y_{i}) \sar \Box \phi, \Delta(y_{i}) \sep \cdots \sep \Gamma(y_{n}) \sar \Delta(y_{n})$}
\DisplayProof
\end{center}
Last, in case (3), we know that $x = y_{n-1}$ and $y = y_{n}$ due to the freshness condition imposed on the $\boxrt$ rule. In this case, $\lnsg$ has the form of the premise shown below, meaning, a single application of the $\boxr$ rule gives the linear nested sequent corresponding to the path $y_{1}, \ldots, y_{n}$ in the conclusion of the $\boxrt$ instance above.
\begin{center}
\AxiomC{$\Gamma(y_{1}) \sar \Delta(y_{1}) \sep \cdots \sep \Gamma(y_{n-1}) \sar \Delta(y_{n-1}) \sep \Box \phi \sar \phi$}
\RightLabel{$\boxr$}
\UnaryInfC{$\Gamma(y_{1}) \sar \Delta(y_{1}) \sep \cdots \sep \Gamma(y_{n-1}) \sar \Box \phi, \Delta(y_{n-1})$}
\DisplayProof
\end{center}
\end{proof}

The following is an immediate consequence of Theorems~\ref{thm:soundness-lns}, \ref{thm:proofs-to-end-active}, and \ref{thm:tree-to-lin}. 

\begin{corollary}[$\lingl$ Soundness and Completeness]
$\phi$ is valid \iffi $\sar \phi$ is provable in $\lingl$.
\end{corollary}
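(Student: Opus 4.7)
The plan is to establish both directions of the biconditional by chaining together the three referenced results, using the observation that a tree sequent of the form $\sar x : \phi$ contains only the single label $x$ and therefore translates in a controlled way to a single-component linear nested sequent.

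For the soundness direction, suppose $\sar \phi$ is provable in $\lingl$. I would apply \thm~\ref{thm:soundness-lns} directly to conclude that $\sar \phi$ is valid, and then unfold the formula interpretation: $\formint(\sar \phi) = \bigwedge \emptyset \imp \bigvee \{\phi\}$, which is equivalent to $\phi$ itself. Hence $\phi$ is valid.

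For the completeness direction, suppose $\phi$ is valid. By \thm~\ref{thm:tree-seq-properties}(5), there exists a $\treegl$ proof of the tree sequent $\sar x : \phi$. I would then invoke \thm~\ref{thm:proofs-to-end-active} — whose hypothesis asks precisely for an end sequent of this shape — to convert the proof into an end-active $\treegl$ proof of $\sar x : \phi$. Finally, applying \thm~\ref{thm:tree-to-lin} to this end-active proof yields a $\lingl$ proof of the linear nested sequent obtained by reading off labels along some root-to-leaf path of the end sequent's underlying tree. Since the tree in $\sar x : \phi$ consists of the single label $x$ (which is simultaneously root and leaf), the only such path is $x$ itself, so the resulting $\lingl$ sequent is $\Gamma(x) \sar \Delta(x) = \ \sar \phi$, as required.

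I do not expect any genuine obstacle here: the corollary is essentially a composition of already-established transformations. The one point worth checking carefully is that the path selected in the proof of \thm~\ref{thm:tree-to-lin} produces exactly $\sar \phi$ rather than some longer or weakened linear nested sequent, but this is forced by the single-label structure of the input. No induction, auxiliary lemma, or case analysis is needed beyond pointing to the three theorems and verifying this one structural observation.
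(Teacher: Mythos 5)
Your proposal is correct and matches the paper's intent: the paper treats this corollary as an immediate consequence of Theorems~\ref{thm:soundness-lns}, \ref{thm:proofs-to-end-active}, and \ref{thm:tree-to-lin} (together with Theorem~\ref{thm:tree-seq-properties}(5) for completeness, which you rightly make explicit), and your chaining of these results, including the observation that the single-label tree sequent $\sar x : \phi$ forces the output of the linearization to be exactly $\sar \phi$, is precisely the argument the paper leaves implicit.
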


Last, we show that every $\lingl$ proof can be put into a \emph{normal form} (see \cref{def:normal-form} and \cref{thm:lingl-normalizable} below) such that (reading the proof bottom-up) $\boxrt$ instances are preceded by $\fourru$ instances, which are preceded by $\boxlt$ instances, which are preceded by local rule instances (or, initial rules). We will utilize this normal form in the next section to show that every $\lingl$ proof can be transformed into a Gentzen sequent proof (\cref{thm:lin-to-seq}). We let $\block$ be a set of $\lingl$ rules and define a \emph{block} to be a derivation that only uses rules from $\block$. We use the following notation to denote blocks, showing that the set $\block$ of rules derives $\lnsg$ from $\lnsg_{1}, \ldots, \lnsg_{n}$, and refer to $\lnsg_{1}, \ldots, \lnsg_{n}$ as the \emph{premises} of the block $\block$.
\begin{center}
\AxiomC{$\lnsg_{1}, \ldots, \lnsg_{n}$}
\RightLabel{$\block$}
\doubleLine
\UnaryInfC{$\lnsg$}
\DisplayProof
\end{center}


\begin{definition}[Normal Form]\label{def:normal-form} A proof in $\lingl$ is in \emph{normal form} \iffi each bottom-up $\boxrt$ application is derived from a block $\block$ of $\fourru$ rules, whose premise is derived from a block $\block'$ of $\boxlt$ rules, whose premise is derived from a block $\block''$ of local rules, as indicated below. 
\begin{center}
\AxiomC{$\lnsg \sep \Gamma \sar \Delta \sep \Sigma_{1} \sar \Pi_{1}$}
\AxiomC{$\ldots$}
\AxiomC{$\lnsg \sep \Gamma \sar \Delta \sep \Sigma_{n} \sar \Pi_{n}$}
\RightLabel{$\block''$}
\doubleLine
\TrinaryInfC{$\lnsg \sep \Gamma \sar \Delta \sep \Gamma', \Gamma'', \Box \phi \sar \phi$}
\RightLabel{$\block'$}
\doubleLine
\UnaryInfC{$\lnsg \sep \Gamma \sar \Delta \sep \Gamma', \Box \phi \sar \phi$}
\RightLabel{$\block$}
\doubleLine
\UnaryInfC{$\lnsg \sep \Gamma \sar \Delta \sep \Box \phi \sar \phi$}
\RightLabel{$\boxrt$}
\UnaryInfC{$\lnsg \sep \Gamma \sar \Box \phi, \Delta$}
\DisplayProof
\end{center}
We refer to a block of rules of the above form as a \emph{complete block}, and refer to the portion of a complete block consisting of only $\boxr$, $\block$, and $\block'$ as a \emph{modal block}. 
\end{definition}

As proven in the next section (\cref{thm:lin-to-seq}), every normal form proof in $\lingl$ can be transformed into a proof in Sambin and Valentini's Gentzen calculus $\seqgl$. Therefore, we need to show that every proof in $\lingl$ can be put into normal form. We prove this by making an observation about the structure of proofs in $\lingl$. Observe that local and propagation rules in $\lingl$ only affect the end component of linear nested sequents and preserve the length of such sequents, whereas the $\boxrt$ rule increases the length of a linear nested sequent by 1 when applied bottom-up. This implies that any $\lingl$ proof $\prf$ bottom-up proceeds in repetitive stages, as we now describe. Let $\prf$ be a proof in $\lingl$ with conclusion $\lnsg$ such that $\len{\lnsg} = n$. The conclusion $\lnsg$ is derived with a block $\block$ of local and propagation rules that only affect the $n$-component in inferences with the premises of the block $\block$ being initial rules or derived by applications of $\boxrt$ rules. These applications of $\boxrt$ rules will have premises of length $n+1$ and will be preceded by blocks $\block_{i}$ of local and propagation rules that only affect the $(n + 1)$-component in inferences. The premises of the $\block_{i}$ blocks will then either be initial rules or derived by applications of $\boxrt$ rules that have premises of length $n+2$, which are preceded by blocks of local and propagation rules that only affect the $(n + 2)$-component in inferences, and so on. Every proof in $\lingl$ will have this repetitive structure.

Let $\boxrt$ be applied in an $\lingl$ proof $\prf$ with premise $\lnsg \sep \Gamma \sar \Delta \sep \Box \phi \sar \phi$ of length $n$. We say that an instance of a local or propagation rule $\ru$ in $\prf$ is \emph{length-consistent} with $\boxrt$ \iffi the length of the conclusion of $\ru$ is equal to $n$. Based on the discussion above, we can see that for any $\boxrt$ application in a proof $\prf$, all length-consistent local and propagation rules will occur in a block $\block$ above the $\boxrt$ application with $\block$ free of other $\boxrt$ rules. It is not difficult to show that $\block$ can be transformed into a \emph{complete block} by (1) successively permuting $\fourru$ rules down into a block above $\boxrt$, and (2) successively permuting $\boxlt$ rules down above the $\fourru$ block. After the permutations from (1) and (2) have been carried out, the premise of the $\boxlt$ block will be derived by length-consistent local rule applications, showing that $\boxrt$ is preceded by a complete block. As these permutations can be performed for every $\boxrt$ rule in a proof, every proof can be put into normal form.

\begin{theorem}\label{thm:lingl-normalizable}
Every proof in $\lingl$ can be transformed into a proof in normal form.
\end{theorem}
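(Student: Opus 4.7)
My plan is to follow the roadmap that the author has already sketched in the paragraphs immediately preceding the theorem: exploit the stratified structure of $\lingl$ proofs to isolate the length-consistent block above each $\boxrt$ application, then reorder the rules within that block via local permutations, in the spirit of \cref{lem:permutations} for $\treegl$.

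First, I would make the structural observation precise. Since local rules and propagation rules preserve the length of the end component while only $\boxrt$ strictly increases length when read bottom-up, every proof $\prf$ decomposes into strata: for each $\boxrt$ application in $\prf$ with premise of length $n+1$, the sub-derivation immediately above it consists entirely of length-consistent local rules and propagation rules operating on components of length $n+1$, until one meets either initial rules or further $\boxrt$ applications (whose premises have length $n+2$). Call this sub-derivation the \emph{block above} that $\boxrt$ application. I would do an induction on the number of $\boxrt$ applications in $\prf$ (or, equivalently, process $\boxrt$ applications in any fixed order), applying the rearrangement argument to each block above a $\boxrt$ in turn; since the permutations do not change the number or location of $\boxrt$ rules, the strata can be normalised independently.

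The core step is the rearrangement of a single block. I would prove the $\lingl$ analogues of \cref{lem:permutations}: (i) $\fourru$ permutes below any local rule and below $\boxlt$; (ii) $\boxlt$ permutes below any local rule. Each case is a short diagram chase; hp-invertibility of $\fourru$ and $\boxlt$ (\cref{lem:hp-invert-lingl}) together with hp-admissibility of $\wk$ (\cref{lem:wk-admiss-lns}) guarantee that the permutations do not increase proof height. Using (i), I would push every $\fourru$ instance inside the block downward past local rules and $\boxlt$ instances, until all $\fourru$ rules accumulate in a contiguous sub-block $\block$ sitting directly above the target $\boxrt$. Using (ii), I would then push every $\boxlt$ instance downward past the remaining local rules to form a contiguous sub-block $\block'$ directly above $\block$. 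What remains above is exactly a block $\block''$ of local rules, which is the configuration required by \cref{def:normal-form}.

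The main obstacle is the bookkeeping for Step (i)--(ii), and in particular the case when a unary propagation rule must be permuted past the only binary local rule $\disl$: the unary rule has to be applied in both premises of $\disl$, and one must check that the duplicated instances still act on formulae present in both branches and that the active $\Box\phi$ is not consumed before the permutation. An analogous subtlety arises for permuting $\fourru$ past $\boxlt$ when both rules act on the boundary between the last two components; here one must treat the subcases according to whether their principal $\Box$-formulae coincide or are distinct. Once these rule-permutation lemmas are verified, the iterative rearrangement terminates for each block (for instance, by well-founded induction on the multiset of depths of $\fourru$ and $\boxlt$ applications within the block) and delivers the required normal form.
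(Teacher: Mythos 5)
Your proposal matches the paper's own proof essentially step for step: it isolates the length-consistent block above each $\boxrt$ application, first permutes all $\fourru$ instances down past local rules and $\boxlt$ (using hp-invertibility of $\fourru$ to handle the binary $\disl$ case), then permutes all $\boxlt$ instances down past the remaining local rules, leaving a block of local rules on top. The subtleties you flag (duplication across the two premises of $\disl$, and the subcases for $\fourru$ versus $\boxlt$ at the component boundary) are exactly the cases the paper works out explicitly, so the proposal is correct and takes the same route.
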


\section{Sequent Systems and Correspondences}\label{sec:main-theorem}

\begin{figure}[t]

\begin{center}
\begin{tabular}{c c c c}
\AxiomC{$\phantom{\Gamma}$}
\RightLabel{$\id$}
\UnaryInfC{$\Gamma, \phi \sar \phi, \Delta$}
\DisplayProof

&

\AxiomC{$\Gamma \sar \phi, \Delta $}
\RightLabel{$\negl$}
\UnaryInfC{$\Gamma, \neg \phi \sar \Delta $}
\DisplayProof

&

\AxiomC{$\Gamma, \phi \sar \Delta$}
\RightLabel{$\negr$}
\UnaryInfC{$\Gamma \sar \neg \phi, \Delta$}
\DisplayProof

&

\AxiomC{$\Gamma, \phi \sar \Delta$}
\AxiomC{$\Gamma, \psi \sar \Delta$}
\RightLabel{$\disl$}
\BinaryInfC{$\Gamma, \phi \lor \psi \sar \Delta$}
\DisplayProof
\end{tabular}
\end{center}

\begin{center}
\begin{tabular}{c c c}
\AxiomC{$\Gamma \sar \phi, \psi, \Delta$}
\RightLabel{$\disr$}
\UnaryInfC{$\Gamma \sar \phi \lor \psi, \Delta$}
\DisplayProof

&

\AxiomC{$\Box \Gamma, \Gamma, \Box \phi \sar \phi$}
\RightLabel{$\glbox$}
\UnaryInfC{$\Sigma, \Box \Gamma \sar \Box \phi, \Delta$}
\DisplayProof

&

\AxiomC{$\Box \Gamma, \Gamma \sar \phi$}
\RightLabel{$\fourbox$}
\UnaryInfC{$\Sigma, \Box \Gamma \sar \Box \phi, \Delta$}
\DisplayProof
\end{tabular}
\end{center}

\caption{Sequent calculus rules.\label{fig:seq-calc}}
\end{figure}

\subsection{Gentzen, Cyclic, and Non-Wellfounded Systems}

We use $\Gamma$, $\Delta$, $\Sigma$, $\ldots$ to denote finite multisets of formulae within the context of sequent systems. For a multiset $\Gamma := \phi_{1}, \ldots, \phi_{n}$, we define $\Box \Gamma := \Box \phi_{1}, \ldots, \Box \phi_{n}$. A \emph{sequent} is defined to be an expression of the form $\Gamma \sar \Delta$. 
The sequent calculus $\seqgl$ for $\logicgl$ consists of the rules $\id$, $\negl$, $\negr$, $\disl$, $\disr$, and $\glbox$ shown in \fig~\ref{fig:seq-calc} and is an equivalent variant of the Gentzen calculi $\mathsf{GLSC}$ and $\mathsf{GLS}$ introduced by Sambin and Valentini for $\logicgl$~\cite{SamVal80,SamVal82}.\footnote{$\seqgl$ differs from Sambin and Valentini's original systems in that multisets are used instead of sets, rules for superfluous logical connectives (e.g., conjunction $\land$ and implication $\rightarrow$) have been omitted as these are definable in terms of other rules, and the weakening rules have been absorbed into $\id$ and $\glbox$.} 
The system $\seqgl$ is sound and complete for $\logicgl$, admits syntactic cut-elimination, and the weakening and contraction rules $\wk$, $\ctrl$, and $\ctrr$ (shown below) are admissible (cf.~\cite{GorRam12b,Sha14}).
\begin{center}
\begin{tabular}{c c c c}
\AxiomC{$\Gamma \sar \Delta$}
\RightLabel{$\wk$}
\UnaryInfC{$\Gamma, \Sigma \sar \Pi, \Delta$}
\DisplayProof

&

\AxiomC{$\Gamma, \phi, \phi \sar \Delta$}
\RightLabel{$\ctrl$}
\UnaryInfC{$\Gamma, \phi \sar \Delta$}
\DisplayProof

&

\AxiomC{$\Gamma \sar \phi, \phi, \Delta$}
\RightLabel{$\ctrr$}
\UnaryInfC{$\Gamma \sar \phi, \Delta$}
\DisplayProof

&

\AxiomC{$\Gamma \sar \phi, \Delta$}
\AxiomC{$\Gamma, \phi \sar \Delta$}
\RightLabel{$\cut$}
\BinaryInfC{$\Gamma \sar \Delta$}
\DisplayProof
\end{tabular}
\end{center}

Shamkanov~\cite{Sha14} showed that equivalent non-wellfounded and cyclic sequent systems could be obtained for $\logicgl$ by taking the sequent calculus for the modal logic $\mathsf{K4}$ and generalizing the notion of proof. The sequent calculus $\mathsf{K4_{seq}}$ is obtained by replacing the $\glbox$ rule in $\seqgl$ with the $\fourbox$ rule shown in \fig~\ref{fig:seq-calc}. Let us now recall Shamkanov's non-wellfounded sequent calculus $\seqgli$ and cyclic sequent calculus $\seqglc$ for $\logicgl$. We present Shamkanov's systems in a \emph{two-sided format}, i.e., using two-sided sequents $\Gamma \sar \Delta$ rather than one-sided sequents of the form $\Gamma$. This makes the correspondence between Shamkanov's systems and $\seqgl$ clearer as well as saves us from having to introduce a new language for $\logicgl$ since one-sided sequents use formulae in negation normal form. Translating proofs with two-sided sequents to proofs with one-sided sequents and vice-versa can be easily obtained by standard techniques, and so, this minor modification causes no problems.

A \emph{derivation} of a sequent $\Gamma \sar \Delta$ is defined to be a (potentially infinite) tree whose nodes are labeled with sequents such that (1) $\Gamma \sar \Delta$ is the root of the tree, and (2) each parent node is taken to be the conclusion of a rule in $\mathsf{K4_{seq}}$ with its children the corresponding premises. A \emph{non-wellfounded proof} is a derivation 
such that all leaves are initial sequents. $\seqgli$ is the non-wellfounded sequent system obtained by letting the set of provable sequents be determined by non-wellfounded proofs.

A \emph{cyclic derivation} is a pair $\prf = (\kappa,c)$ such that $\kappa$ is a finite derivation in $\mathsf{K4_{seq}}$ and $c$ is a function with the following properties: (1) $c$ is defined on a subset of the leaves of $\kappa$, (2) the image $c(x)$ lies on the path from the root of $\kappa$ to $x$ and does not coincide with $x$, and (3) both $x$ and $c(x)$ are labeled by the same sequent. If the function $c$ is defined at a leaf $x$, then we say that a \emph{back-link} exists from $x$ to $c(x)$. 
A \emph{cyclic proof} is a cyclic derivation $\prf = (\kappa,c)$ such that every leaf $x$ is labeled by an instance of $\id$ or there exists a back-link from $x$ to the node $c(x)$. $\seqglc$ is the cyclic sequent system obtained by letting the set of provable sequents be determined by cyclic proofs.


Shamkanov established a three-way correspondence between $\seqgl$, $\seqgli$, and $\seqglc$, providing syntactic transformations mapping proofs between the three systems.\footnote{We note that Shamkanov's proof transformation from $\seqgl$ to $\seqgli$ relies on the admissibility of the $\cut$ rule in $\seqgl$. This is not problematic however since $\seqgl$ admits syntactic cut-elimination.}

\begin{theorem}[\cite{Sha14}] $\Gamma \sar \Delta$ is provable in $\seqgl$ \iffi $\Gamma \sar \Delta$  is provable in $\seqgli$ \iffi $\Gamma \sar \Delta$  is provable in $\seqglc$.
\end{theorem}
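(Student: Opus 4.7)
\begin{proofsketch}
My plan is to establish the three equivalences by a cycle of syntactic translations: $\seqgl \Rightarrow \seqgli$, $\seqgli \Rightarrow \seqglc$, and $\seqglc \Rightarrow \seqgl$. Admissibility of the $\cut$ rule in $\seqgl$ (from Sambin--Valentini's cut-elimination theorem) plays a key role in the first translation, as it lets one intermix the given $\seqgl$ derivation with auxiliary non-wellfounded branches.

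For $\seqgl \Rightarrow \seqgli$, I would simulate each $\glbox$ application by a regular non-wellfounded unfolding built from $\fourbox$. Given a proof of $\Box \Gamma, \Gamma, \Box \phi \sar \phi$, to derive $\Sigma, \Box \Gamma \sar \Box \phi, \Delta$ I would first apply $\fourbox$ and reduce to $\Box \Gamma, \Gamma \sar \phi$. This sequent can then be obtained by $\cut$ on $\Box \phi$ against $\Box \Gamma, \Gamma \sar \Box \phi$ and the given premise $\Box \Gamma, \Gamma, \Box \phi \sar \phi$. Deriving $\Box \Gamma, \Gamma \sar \Box \phi$ again invokes $\fourbox$ and reduces to $\Box \Gamma, \Gamma \sar \phi$, producing a cyclic unfolding. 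The progress condition is automatic because $\fourbox$ fires infinitely often along every branch, yielding a trace through the loop-critical boxed formula.

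For $\seqgli \Rightarrow \seqglc$, my plan is to exploit a subformula-like property of $\mathsf{K4_{seq}}$: bottom-up, the only new formulae introduced are subformulae of the conclusion, and $\fourbox$ merely strips outer boxes while preserving $\Box \Gamma$. Consequently, every sequent in a non-wellfounded proof of $\Gamma_{0} \sar \Delta_{0}$ is built from the finitely many subformulae of $\Gamma_{0} \cup \Delta_{0}$ and thus ranges over a finite set. By a pigeonhole / K\"onig-style argument, every infinite branch contains a repeated sequent, and iteratively pruning the subtree above such a repetition while installing a back-link produces a finite cyclic derivation. Verifying the back-link condition of $\seqglc$ is then a routine consequence of the regularity of the original non-wellfounded proof.

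The main obstacle is the direction $\seqglc \Rightarrow \seqgl$, where unbounded cyclic reasoning must be compressed into the fixed shape of a single Löb application. My plan is to induct on the number of back-links, eliminating innermost cycles first. If $x$ is a leaf with a back-link to an ancestor $c(x)$ labelled $\Gamma' \sar \Delta'$, the subderivation between $c(x)$ and $x$ uses only $\mathsf{K4_{seq}}$ rules and should be reinterpreted as a derivation of the premise $\Box \Gamma, \Gamma, \Box \phi \sar \phi$ of a $\glbox$ instance whose conclusion is $\Gamma' \sar \Delta'$; here the boxed formula $\Box \phi$ is the loop-critical formula whose hypothetical availability at $c(x)$ licences the derivation of $\phi$ higher up. Replacing the back-link together with the enclosed subderivation by this $\glbox$ application strictly decreases the number of back-links, and iterating yields a well-founded $\seqgl$ proof. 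The delicate point will be isolating the correct loop-critical formula at each back-link and handling interactions between nested cycles, which I would address by a careful analysis of which $\fourbox$ applications must fire on the loop in order to satisfy the progress condition.
\end{proofsketch}
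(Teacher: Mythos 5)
Your three-step cycle $\seqgl \Rightarrow \seqgli \Rightarrow \seqglc \Rightarrow \seqgl$ is exactly the architecture of Shamkanov's proof (the paper only cites \cite{Sha14} for this theorem), and your first two legs are essentially his: the cut-mediated unfolding of $\glbox$ into an infinite $\fourbox$ regress, and the subformula-plus-pigeonhole pruning of a non-wellfounded proof into a cyclic one. One caveat on the first leg: the cut you introduce to glue the translated premise $\Box\Gamma, \Gamma, \Box\phi \sar \phi$ onto the $\fourbox$ unfolding lives in the \emph{non-wellfounded} system, so what you actually need is admissibility of $\cut$ in $\seqgli$, not in $\seqgl$. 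That is a substantial separate theorem (Shamkanov proves it by a fixed-point argument on an ultrametric space of derivations); cut-elimination for the finitary Gentzen system does not transfer to the non-wellfounded setting for free, and your sketch neither states nor discharges this obligation.

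The genuine gap is in the leg $\seqglc \Rightarrow \seqgl$. Your plan to replace the subderivation between a back-link target $c(x)$ (labelled $\Gamma' \sar \Delta'$) and the leaf $x$ by a single $\glbox$ instance with conclusion $\Gamma' \sar \Delta'$ does not typecheck: $\Gamma' \sar \Delta'$ is an arbitrary sequent, not one of the shape $\Sigma, \Box\Gamma \sar \Box\phi, \Delta$ required by $\glbox$, and the ``loop-critical formula'' you hope to isolate is not a formula occurring anywhere in the derivation. The device that makes this direction work is to take the \emph{formula interpretation} $\bigwedge\Gamma' \to \bigvee\Delta'$ of the repeated sequent as the Löb formula, and to prove by induction on the finite derivation tree a strengthened statement about cyclic derivations with open assumptions, of the form: $\seqgl$ proves $\boxdot\assu(\prf), \Box\bassu(\prf) \sar \bigwedge\Gamma \to \bigvee\Delta$, where back-linked leaves are treated as assumptions and are guaranteed to be \emph{boxed} because every back-link path crosses a $\fourbox$. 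The boxed self-referential hypothesis $\Box(\bigwedge\Gamma' \to \bigvee\Delta')$ is then discharged by the derived rule $\lobr$ (obtainable from $\glbox$ and $\cut$), not by a literal $\glbox$ application at the loop site. Without this reformulation, your ``eliminate innermost cycles first'' induction has no well-defined object to recurse on once a cycle is excised, and nested or multiply-targeted back-links are not handled. So the decomposition is right, but the key idea of the hardest direction is missing.
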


\subsection{Completing the Correspondences}

\begin{theorem}\label{thm:lin-to-seq}
If $\prf$ is a normal form proof of $\sar \phi$ in $\lingl$, then $\prf$ can be transformed into a proof of $\sar \phi$ in $\seqgl$.
\end{theorem}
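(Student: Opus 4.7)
The plan is to define, by induction on the structure of the normal form proof $\prf$, a translation mapping $\prf$ to a $\seqgl$ proof of the end component of $\prf$'s conclusion; since $\sar \phi$ coincides with its own end component, this yields the required $\seqgl$ proof. The base case, where $\prf$ is a single $\idi$ or $\idii$ instance, is immediate because the end component of the conclusion is already an instance of $\id$ in $\seqgl$. When $\prf$ ends with a local rule ($\negl$, $\negr$, $\disl$, or $\disr$), the rule affects only the end component and has an identical counterpart in $\seqgl$; I recursively translate the premises and apply the same rule.

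The essential case is when $\prf$ ends with $\boxr$. Normal form guarantees that this $\boxr$ is immediately preceded (bottom-up) by a $\fourru$-block $\block$, a $\boxl$-block $\block'$, and a local-rule block $\block''$, whose premises sit atop subproofs $\prf_1, \ldots, \prf_m$ each beginning with either an initial rule or another $\boxr$. My plan is to collapse the entire complete block into a single $\glbox$ application in $\seqgl$. Concretely, I invoke the IH on each $\prf_i$ to obtain a $\seqgl$ proof of its end component $\Sigma_i \sar \Pi_i$, apply the local rules of $\block''$ in $\seqgl$ to combine these into a proof of $\Gamma_f, \Gamma_u^-, \Box \phi \sar \phi$ (where $\Gamma_f$ lists the boxed formulas copied by $\block$ and $\Gamma_u^-$ the unboxed formulas introduced by $\block'$), use hp-admissibility of $\wk$ in $\seqgl$ to extend this to the $\glbox$ premise $\Box \Gamma^*, \Gamma^*, \Box \phi \sar \phi$ for a suitably chosen multiset $\Box \Gamma^*$ of boxed formulas from the penultimate component, and finally apply $\glbox$ to derive $\Gamma \sar \Box \phi, \Delta$, which is the end component of $\prf$'s conclusion.

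The main obstacle is twofold. First, I need to argue that the recursion is well-founded by confirming that, in a normal form proof of a length-one sequent, $\boxl$ and $\fourru$ rules appear only within complete blocks, so every subproof $\prf_i$ sitting above a $\block''$ necessarily begins with an initial rule or a $\boxr$; otherwise the invariant ``end component provable in $\seqgl$'' would fail (as shown by length-two sequents such as $\Box p \sar \sep \sar p$, whose end components are unprovable Gentzen sequents). Second, the alignment between the $\lingl$ complete block and the $\seqgl$ $\glbox$ rule requires care: $\lingl$ allows $\fourru$ and $\boxl$ to copy and unbox boxed formulas \emph{independently}, whereas $\glbox$ treats its $\Box \Gamma$ uniformly as both copied and unboxed in the premise. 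Bridging this gap requires choosing $\Box \Gamma^*$ to be the multiset union of the formulas selected by $\block$ and $\block'$, and supplying the remaining $\glbox$-required formulas via hp-admissible weakening on the $\seqgl$ side.
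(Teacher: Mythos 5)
Your proposal is correct and follows essentially the same route as the paper: translate each linear nested sequent to its end component, handle initial and local rules by their $\seqgl$ counterparts, and collapse each modal block ($\boxr$ preceded by the $\fourru$- and $\boxl$-blocks) into an admissible $\wk$ followed by a single $\glbox$ application. Your choice of $\Box\Gamma^{*}$ as the union of the formulas touched by the two blocks, with weakening supplying the remaining $\glbox$-required copies, is exactly the paper's $\Box\Sigma_{1},\Box\Sigma_{2},\Box\Sigma_{3}$ bookkeeping.
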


\begin{proof} We show how to transform the normal form proof $\prf$ of $\sar \phi$ in $\lingl$ into a proof $\prf'$ of $\sar \phi$ in $\seqgl$ in a bottom-up manner. For the conclusion $\sar \phi$ of the proof $\prf$, we take $\sar \phi$ to be the conclusion of $\prf'$. We now make a case distinction on bottom-up applications of rules applied in $\prf$. For each rule $\negl$, $\negr$, $\disl$, or $\disr$, we translate each premise of the rule as its end component. For example, the $\disl$ rule will be translated as shown below.
\begin{center}
\begin{tabular}{c c}
\AxiomC{$\lnsg \sep \Gamma, \phi \sar \Delta$}
\AxiomC{$\lnsg \sep \Gamma, \psi \sar \Delta$}
\RightLabel{$\disl$}
\BinaryInfC{$\lnsg \sep \Gamma, \phi \lor \psi \sar \Delta$}
\DisplayProof

&

\AxiomC{$\Gamma, \phi \sar \Delta$}
\AxiomC{$\Gamma, \psi \sar \Delta$}
\RightLabel{$\disl$}
\BinaryInfC{$\Gamma, \phi \lor \psi \sar \Delta$}
\DisplayProof
\end{tabular}
\end{center}
Suppose now that we encounter a $\boxrt$ rule while bottom-up translating the proof $\prf$ into a proof in $\seqgl$. Since $\prf$ is in normal form, we know that $\boxrt$ is preceded by a modal block (see \cref{def:normal-form}), that is, $\boxrt$ is (bottom-up) preceded by a block $\block_{\fourru}$ of $\fourru$ rules, which is preceded by a block $\block_{\boxlt}$ of $\boxlt$ rules, i.e., the modal block has the shape shown below. We suppose that $\Box \Sigma_{1}$ are the principal formulae of the $\fourru$ applications, $\Box \Sigma_{2}$ are those formulae principal in both $\fourru$ and $\boxl$ applications, and $\Box \Sigma_{3}$ are those formulae principal only in $\boxl$ applications.
\begin{center}
\AxiomC{$\lnsg \sep \Gamma, \Box \Sigma_{1}, \Box \Sigma_{2}, \Box \Sigma_{3} \sar \Delta \sep \Box \Sigma_{1}, \Box \Sigma_{2}, \Sigma_{2}, \Sigma_{3}, \Box \phi \sar \phi$}
\RightLabel{$\block_{\boxlt}$}
\doubleLine
\UnaryInfC{$\lnsg \sep \Gamma, \Box \Sigma_{1}, \Box \Sigma_{2}, \Box \Sigma_{3} \sar \Delta \sep \Box \Sigma_{1}, \Box \Sigma_{2}, \Box \phi \sar \phi$}
\RightLabel{$\block_{\fourru}$}
\doubleLine
\UnaryInfC{$\lnsg \sep \Gamma, \Box \Sigma_{1}, \Box \Sigma_{2}, \Box \Sigma_{3} \sar \Delta \sep \Box \phi \sar \phi$}
\RightLabel{$\boxr$}
\UnaryInfC{$\lnsg \sep \Gamma, \Box \Sigma_{1}, \Box \Sigma_{2}, \Box \Sigma_{3} \sar \Box \phi, \Delta$}
\DisplayProof
\end{center}
We bottom-up translate the entire block as shown below, where the conclusion is obtained from the end component of the modal block's conclusion. Note that we may apply the $\wk$ rule because it is admissible in $\seqgl$.
\begin{center}
\AxiomC{$\Box \Sigma_{1}, \Box \Sigma_{2}, \Sigma_{2}, \Sigma_{3}, \Box \phi \sar \phi$}
\RightLabel{$\wk$}
\UnaryInfC{$\Box \Sigma_{1}, \Box \Sigma_{2}, \Box \Sigma_{3}, \Sigma_{1}, \Sigma_{2}, \Sigma_{3}, \Box \phi \sar \phi$}
\RightLabel{$\glbox$}
\UnaryInfC{$\Gamma, \Box \Sigma_{1}, \Box \Sigma_{2}, \Box \Sigma_{3} \sar \Box \phi, \Delta$}
\DisplayProof
\end{center}
Last, suppose an instance of $\id$ is reached in the translation, as shown below left.
\begin{center}
\begin{tabular}{c c}
\AxiomC{}
\RightLabel{$\id$}
\UnaryInfC{$\lnsg \sep \Gamma, p \sar p, \Delta$}
\DisplayProof

&

\AxiomC{}
\RightLabel{$\id$}
\UnaryInfC{$\Gamma, p \sar p, \Delta$}
\DisplayProof
\end{tabular}
\end{center}
We translate the linear nested sequent as its end component, yielding the Gentzen sequent shown above right, which is an instance of $\id$.
\end{proof}

Last, the following theorem completes the circuit of proof transformations and establishes syntactic correspondences between $\gtgl$, $\treegl$, $\lingl$, $\seqgl$, $\seqgli$, and $\seqglc$. 

\begin{theorem}\label{thm:seq-to-lab}
If $\Gamma \sar \Delta$ is provable in $\seqgl$, then $x : \Gamma \sar x : \Delta$ is provable in $\gtgl$.
\end{theorem}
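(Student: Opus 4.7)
The proof will proceed by induction on the height of the $\seqgl$ proof of $\Gamma \sar \Delta$. For the base case of $\id$, the translated sequent $x : \Gamma, x : \phi \sar x : \phi, x : \Delta$ is provable in $\gtgl$ by Theorem~\ref{thm:lab-seq-properties}(1). For the inductive cases involving the local rules $\negl$, $\negr$, $\disl$, and $\disr$, I simply replay the $\seqgl$ inference in $\gtgl$ with every formula prefixed by the single label $x$; the IH supplies a $\gtgl$ proof of each translated premise.

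The essential case is $\glbox$, where I have an IH proof of $x : \Box\Gamma, x : \Gamma, x : \Box\phi \sar x : \phi$ in $\gtgl$ and must construct a $\gtgl$ proof of $x : \Sigma, x : \Box\Gamma \sar x : \Box\phi, x : \Delta$. The plan is as follows. Pick a fresh label $y$ and apply the hp-admissible $\lsub$ rule (Theorem~\ref{thm:lab-seq-properties}(3)) with substitution $x \mapsto y$ to obtain a proof of $y : \Box\Gamma, y : \Gamma, y : \Box\phi \sar y : \phi$, then hp-admissibly weaken to
$$xRy, x : \Sigma, x : \Box\Gamma, y : \Box\Gamma, y : \Gamma, y : \Box\phi \sar y : \phi, x : \Delta.$$
I then eliminate $y : \Box\Gamma, y : \Gamma$ from the antecedent by iterated use of the admissible $\cut$ rule (Theorem~\ref{thm:lab-seq-properties}(4)). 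For each $\gamma \in \Gamma$, the labeled formula $y : \gamma$ is cut against $xRy, x : \Box\gamma \sar y : \gamma$, which follows by a single $\boxl$ step and the generalised identity of Theorem~\ref{thm:lab-seq-properties}(1). The labeled formula $y : \Box\gamma$ is cut against $xRy, x : \Box\gamma \sar y : \Box\gamma$, which I prove by applying $\boxr$ with a fresh $z$ to reduce to $xRy, yRz, x : \Box\gamma, z : \Box\gamma \sar z : \gamma$, then $\trans$ to adjoin $xRz$, and finally $\boxl$ using $x : \Box\gamma$ and $xRz$ to introduce $z : \gamma$ as auxiliary, closing with Theorem~\ref{thm:lab-seq-properties}(1). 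Once all such cuts have been carried out, the resulting proof concludes $xRy, x : \Sigma, x : \Box\Gamma, y : \Box\phi \sar y : \phi, x : \Delta$, and a final application of $\boxr$ with $y$ fresh yields the target $x : \Sigma, x : \Box\Gamma \sar x : \Box\phi, x : \Delta$.

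The main obstacle is precisely the $\glbox$ case, and within it the derivation of the auxiliary sequent $xRy, x : \Box\gamma \sar y : \Box\gamma$. Intuitively, $\glbox$ internalises both L\"ob's axiom (witnessed by the additional $\Box\phi$ in the antecedent of its premise) and the $\mathsf{K4}$-style collection $\Box\Gamma$ of boxed hypotheses, whereas in $\gtgl$ these features must be recovered geometrically via the $\trans$ structural rule. It is the combined use of $\trans$ (to re-aim the box-formulae at $x$ toward a fresh successor of $y$) and the hp-admissibility of $\lsub$ (to relocate the IH proof from $x$ to $y$), together with cut-admissibility, that makes the translation go through. Composed with Theorem~\ref{thm:lab-tree-equiv}, Theorem~\ref{thm:tree-to-lin}, Theorem~\ref{thm:lin-to-seq}, and Shamkanov's correspondences, this completes the six-way correspondence depicted in Figure~\ref{fig:correspondence-diagram}.
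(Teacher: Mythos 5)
Your proposal is correct, and it handles the only non-trivial case, $\glbox$, by a genuinely different route than the paper. The paper's proof is cut-free: starting from the IH proof of $x : \Box\Gamma, x : \Gamma, x : \Box\phi \sar x : \phi$, it weakens in a new \emph{root} $y$ \emph{below} $x$ together with $y : \Box\Gamma$ and the relational atom $yRx$, absorbs $x : \Gamma$ by $\boxl$ applications along $yRx$, absorbs $x : \Box\Gamma$ by the admissible $\fourru$ rule (imported from~\cite{GorRam12}), then applies $\boxr$ (with $x$ now playing the role of the fresh successor), renames $y$ back to $x$ via $\lsub$, and weakens in $\Sigma$ and $\Delta$. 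You instead relocate the IH proof to a fresh \emph{successor} $y$ of $x$ via $\lsub$, weaken in $xRy$ and $x : \Sigma, x : \Box\Gamma$, and discharge the hypotheses $y : \Gamma, y : \Box\Gamma$ by iterated applications of the admissible $\cut$ rule against $xRy, x : \Box\gamma \sar y : \gamma$ (one $\boxl$ step plus generalised identity) and $xRy, x : \Box\gamma \sar y : \Box\gamma$ (via $\boxr$, $\trans$, $\boxl$ --- this is exactly the derivation of the left premise of $\boxl'$ noted in Remark~\ref{rmk:box-left-premise}), before the final $\boxr$. Both arguments are sound; each step you invoke ($\lsub$, $\wk$, $\cut$ from Theorem~\ref{thm:lab-seq-properties}, and the freshness side condition on the closing $\boxr$) checks out. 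The trade-off is that the paper's version stays cut-free and is somewhat shorter, relying on the admissibility of $\fourru$ in $\gtgl$, whereas yours leans on the heavier tool of cut admissibility but is conceptually transparent, directly mirroring the reading of $\glbox$ as discharging the boxed hypotheses $\Box\Gamma, \Gamma$ at the new world.
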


\begin{proof} By induction on the height of the proof $\prf$ in $\seqgl$. The base case immediately follows from Theorem~\ref{thm:lab-seq-properties}-(1), and the $\negl$, $\negr$, $\disl$, and $\disr$ cases of the inductive step straightforwardly follow by applying IH and then the corresponding rule in $\gtgl$. Therefore, we need only show the case where $\prf$ ends with an application of $\glbox$, as shown below left.
\begin{center}
\begin{tabular}{c c}
\AxiomC{$\Box \Gamma, \Gamma, \Box \phi \sar \phi$}
\RightLabel{$\glbox$}
\UnaryInfC{$\Sigma, \Box \Gamma \sar \Box \phi, \Delta$}
\DisplayProof

&

\AxiomC{$x : \Box \Gamma, x : \Gamma, x : \Box \phi \sar x : \phi$}
\RightLabel{$\wk$}
\UnaryInfC{$yRx, y : \Box \Gamma, x : \Box \Gamma, x : \Gamma, x : \Box \phi \sar x : \phi$}
\RightLabel{$\boxl$}
\UnaryInfC{$yRx, y : \Box \Gamma, x : \Box \Gamma, x : \Box \phi \sar x : \phi$}
\RightLabel{$\fourru$}
\UnaryInfC{$yRx, y : \Box \Gamma, x : \Box \phi \sar x : \phi$}
\RightLabel{$\boxr$}
\UnaryInfC{$y : \Box \Gamma \sar y : \Box \phi$}
\RightLabel{$\lsub$}
\UnaryInfC{$x : \Box \Gamma \sar x : \Box \phi$}
\RightLabel{$\wk$}
\UnaryInfC{$x : \Sigma, x : \Box \Gamma \sar x : \Box \phi, x : \Delta$}
\DisplayProof
\end{tabular}
\end{center}
To obtain the desired proof, we first apply the hp-admissible $\wk$ rule (\cref{thm:lab-seq-properties}), followed by applications of the $\boxl$ rule and admissible $\fourru$ rule (cf.~\cite{GorRam12}). Applying the $\boxr$ rule, followed by applications of the hp-admissible $\lsub$ and $\wk$ rules (\cref{thm:lab-seq-properties}), gives the desired conclusion.
\end{proof}

\section{Concluding Remarks}\label{sec:conclusion}



There are various avenues for future research: first, it would be interesting to look into the properties of the new linear nested sequent calculus $\lingl$, investigating additional admissible structural rules, how the system can be amended to allow for the hp-invertibility of all rules, and also looking into syntactic cut-elimination. 
Second, by employing a methodology for extracting nested sequent systems from relational semantics~\cite{LyoOst24}, we can integrate this approach with the linearization technique to develop a general method for extracting (cut-free) linear nested systems from the semantics of various modal, intuitionistic, and related logics. Third, it seems worthwhile to see if the proof transformation techniques discussed in this paper can be applied to structural cyclic systems (e.g., cyclic labeled sequent systems for classical and intuitionistic Gödel-Löb logic~\cite{DasGieMar24}) to remove extraneous structure and extract simpler (cyclic) Gentzen systems.





\bibliography{bibliography}

\appendix

\section{Proofs for \cref{sec:linearizing}}

\begin{customlem}{\ref{lem:permutations}} The following permutations hold in $\treegl$: 
\begin{description}

\item[(1)] If $\ru$ is a non-end-active local rule and $\ruprime$ is non-initial and end-active, then $\ru$ permutes below $\ruprime$ and $\ruprime$ remains end-active after this permutation;

\item[(2)] if $\ru$ is a non-end-active propagation rule and $\ruprime$ is non-initial and end-active, then $\ru$ permutes below $\ruprime$ and $\ruprime$ remains end-active after this permutation.





    
\end{description}
\end{customlem}

\begin{proof} Follows from Lemmas~\ref{lem:app-1} and \ref{lem:app-2} below. 
\end{proof}

\begin{lemma}\label{lem:app-1}
If $\ru$ is a non-end-active local rule and $\ruprime$ is non-initial and end-active, then $\ru$ permutes below $\ruprime$ and $\ruprime$ remains end-active after this permutation.
\end{lemma}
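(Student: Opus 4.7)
The plan is to proceed by an exhaustive case analysis on $\ruprime$ (which is a non-initial end-active rule, hence one of $\negl$, $\negr$, $\disl$, $\disr$, $\boxlt$, $\fourru$, or $\boxrt$), further subdivided by the position of the principal label $a$ of $\ru$ relative to the principal label $b$ and, when applicable, auxiliary label $c$ of $\ruprime$. The crucial invariant is that since $\ru$ is a local rule, applying $\ru$ does not alter the underlying tree $\tel$ of the sequent; hence the leaves and pre-leaves of $\ru$'s premise and conclusion coincide, and the end-activity constraints on the labels of $\ruprime$ remain stable under the intended permutation.

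In each case I would first rule out degenerate subcases by comparing the non-end-activity of $\ru$ (which forces $a$ to be a non-leaf in the intermediate sequent) against the end-activity of $\ruprime$ (which forces $c$, when present, to be a leaf, and $b$ to be a leaf if $\ruprime$ is local, or a pre-leaf if $\ruprime$ is a propagation rule or $\boxrt$). Thus $a = c$ is always impossible, and for local $\ruprime$ even $a = b$ is impossible. When $a$ is disjoint from the principal/auxiliary labels of $\ruprime$, the two rules act on syntactically disjoint regions of the sequent and commute directly. When $a = b$ is permissible (only for $\ruprime \in \{\boxlt, \fourru, \boxrt\}$), I rely on the fact that $\ru$'s principal formula is a negation or disjunction, whereas $\ruprime$'s principal formula at $b$ is a box, so the two do not collide.

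The only binary rule among the local ones is $\disl$, which introduces the standard schematic obstacles from the permutation templates in the main text. If $\ru$ is $\disl$, I follow the template for permuting a binary rule below: I duplicate $\ruprime$ above each of the two premises of $\ru$, which is legal because the principal (and auxiliary) formulae of $\ruprime$ occur at labels distinct from $a$ and thus appear unchanged in both premises. Symmetrically, if $\ruprime$ is binary (i.e.\ $\ruprime = \disl$, necessarily end-active at a leaf), I apply the corresponding $i$-inverse of $\ru$ to the premise of $\ruprime$ against which $\ru$ was not originally applied; this $i$-inverse is hp-admissible by the invertibility clause of \cref{thm:tree-seq-properties}, and after this preparatory step I can apply $\ruprime$ first and conclude with $\ru$. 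For $\ruprime = \boxrt$ one additional obligation is to verify that the freshness condition on the auxiliary label $c$ is preserved after the swap; this is automatic because $a \neq c$ and $\tel$ is unchanged, so $\ru$ neither introduces nor touches $c$.

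The main obstacle is sheer bookkeeping: roughly $4 \times 6$ combinations of $(\ru, \ruprime)$, each potentially requiring a further subcase on the relative position of $a$, $b$, and $c$. No deep new idea is required, but one must consistently verify that (i) the permuted derivation still type-checks (in particular, freshness and principality conditions are respected), and (ii) the end-activity of $\ruprime$ survives the permutation. Point (ii) reduces to observing that $b$ and $c$ retain their leaf/pre-leaf status in the sequent to which $\ruprime$ is freshly applied, which follows from the tree-preservation of local rules together with the hp-invertibility invoked in the binary cases.
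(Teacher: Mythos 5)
Your proposal is correct and follows essentially the same route as the paper's proof: an exhaustive commutation argument whose key step is that the non-end-activity of $\ru$ (principal label a non-leaf) together with the end-activity of $\ruprime$ (principal/auxiliary labels a leaf or pre-leaf) forces the active labels apart, so the two rules act on disjoint formula occurrences and swap directly, with $i$-inverses handling the binary cases exactly as in the permutation templates. The paper merely exhibits three representative cases ($\ru = \negr$ against $\disr$, $\fourru$, $\boxrt$) and declares the rest similar, whereas you additionally spell out the $a=b$ subcase for propagation rules, the freshness check for $\boxrt$, and the tree-preservation argument for why $\ruprime$ stays end-active --- all consistent with, and slightly more explicit than, the paper.
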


\begin{proof} We let $\ru$ be an instance of $\negr$ as the cases where $\ru$ is either $\negl$, $\disl$, or $\disr$ are shown similarly. We show that $\ru$ can be permuted down $\ruprime$ and consider a representative number of cases when $\ruprime$ is either $\disr$, $\fourru$, or $\boxrt$ as the remaining cases are similar.

\medskip

\noindent
$\disr$. By our assumption that $\negr$ is non-end-active and $\disr$ is end-active, we know that the labels $x$ and $y$ are distinct. Hence, we can permute the $\negr$ instance below the $\disr$ instance. Observe that $\disr$ remains end-active after the permutation.
\begin{center}
\begin{tabular}{c c}
\AxiomC{$\tel, \Gamma, x : \phi \sar y : \psi, y : \chi, \Delta$}
\RightLabel{$\negr$}
\UnaryInfC{$\tel, \Gamma \sar x : \neg \phi, y : \psi, y : \chi, \Delta$}
\RightLabel{$\disr$}
\UnaryInfC{$\tel, \Gamma \sar x : \neg \phi, y : \psi \lor \chi, \Delta$}
\DisplayProof

&

\AxiomC{$\tel, \Gamma, x : \phi \sar y : \psi, y : \chi, \Delta$}
\RightLabel{$\disr$}
\UnaryInfC{$\tel, \Gamma, x : \phi \sar y : \psi \lor \chi, \Delta$}
\RightLabel{$\negr$}
\UnaryInfC{$\tel, \Gamma \sar x : \neg \phi, y : \psi \lor \chi, \Delta$}
\DisplayProof
\end{tabular}
\end{center}

\medskip

\noindent
$\fourru$. By our assumption, we know that $z$ is distinct from $y$ in the inferences shown below left, meaning, we can permute $\negr$ below $\fourru$ as shown below right. Observe that $\fourru$ remains end-active after the permutation.
\begin{center}
\begin{tabular}{c c}
\AxiomC{$\tel, xRy, \Gamma, x : \Box \psi, y : \Box \psi, z : \phi \sar \Delta$}
\RightLabel{$\negr$}
\UnaryInfC{$\tel, xRy, \Gamma, x : \Box \psi, y : \Box \psi \sar z : \neg \phi, \Delta$}
\RightLabel{$\fourru$}
\UnaryInfC{$\tel, xRy, \Gamma, x : \Box \psi \sar z : \neg \phi, \Delta$}
\DisplayProof

&

\AxiomC{$\tel, xRy, \Gamma, x : \Box \psi, y : \Box \psi, z : \phi \sar \Delta$}
\RightLabel{$\fourru$}
\UnaryInfC{$\tel, xRy, \Gamma, x : \Box \psi, z : \phi \sar \Delta$}
\RightLabel{$\negr$}
\UnaryInfC{$\tel, xRy, \Gamma, x : \Box \psi \sar z : \neg \phi, \Delta$}
\DisplayProof
\end{tabular}
\end{center}

\medskip

\noindent
$\boxrt$. By our assumption, we know that $z$ is distinct from $y$ in the inferences shown below left, meaning, we can permute $\negr$ below $\boxr$ as shown below right. Trivially, the $\boxr$ rule remains end-active after the permutation.
\begin{center}
\begin{tabular}{c c}
\AxiomC{$\tel, xRy, \Gamma, y : \Box \psi, z : \phi \sar y : \psi, \Delta$}
\RightLabel{$\negr$}
\UnaryInfC{$\tel, xRy, \Gamma, y : \Box \psi \sar y : \psi, z : \neg \phi, \Delta$}
\RightLabel{$\boxrt$}
\UnaryInfC{$\tel, \Gamma \sar x : \Box \psi, z : \neg \phi, \Delta$}
\DisplayProof

&

\AxiomC{$\tel, xRy, \Gamma, y : \Box \psi, z : \phi \sar y : \psi, \Delta$}
\RightLabel{$\boxrt$}
\UnaryInfC{$\tel, \Gamma, z : \phi \sar x : \Box \psi, \Delta$}
\RightLabel{$\negr$}
\UnaryInfC{$\tel, \Gamma \sar x : \Box \psi, z : \neg \phi, \Delta$}
\DisplayProof
\end{tabular}
\end{center}
\end{proof}

\begin{lemma}\label{lem:app-2}
If $\ru$ is a non-end-active propagation rule and $\ruprime$ is non-initial and end-active, then $\ru$ permutes below $\ruprime$ and $\ruprime$ remains end-active after this permutation.
\end{lemma}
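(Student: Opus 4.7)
The plan is to mirror the case analysis of \cref{lem:app-1}: the propagation rule $\ru$ is either $\boxlt$ or $\fourru$, and $\ruprime$ ranges over the non-initial end-active rules. I fix $\ru = \boxlt$ with principal $x : \Box \phi$ and auxiliary $y$ along $xRy$ (the case $\ru = \fourru$ is fully analogous), and for each $\ruprime$ I would exhibit the original derivation fragment together with its permuted counterpart, verifying that the permuted derivation has $\ruprime$ immediately above a $\boxlt$ inference, derives the same conclusion, and keeps $\ruprime$ end-active.

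When $\ruprime$ is a local rule ($\negl$, $\negr$, $\disl$, or $\disr$) the swap is immediate: end-activity of $\ruprime$ places its principal label $z$ at a leaf, while non-end-activity of $\ru$ ensures $z$ differs from $x$ and $y$; hence the formula $y : \phi$ introduced by $\ru$ cannot interfere with $\ruprime$'s action at $z$, and the two inferences commute directly, as in the analogous cases of \cref{lem:app-1}. For $\ruprime \in \{\boxlt, \fourru\}$, a similar argument compares the relational atoms and labels of the two inferences and shows that they operate on structurally independent portions of the tree, so the reordering carries over.

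The main obstacle will be the case $\ruprime = \boxrt$, since $\boxrt$ introduces a fresh auxiliary label $v$ bottom-up. Writing $\ru$'s conclusion as $\ruprime$'s premise $\tel, uRv, \Gamma, v : \Box \psi \sar v : \psi, \Delta$, I split on whether $xRy$ coincides with $uRv$. When $xRy \neq uRv$, the label $y$ is necessarily distinct from $v$ (since $v$ is fresh and appears only via $uRv$), so $\ru$'s formulas avoid $v$ and the swap is clean, with $v$ still fresh in the new conclusion of $\boxrt$. The delicate sub-case is $xRy = uRv$: non-end-activity of $\ru$ together with $v$ being a leaf forces $u = x$ to not be a pre-leaf, so $u$ must have some non-leaf child $w$ in $\tel$. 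In the permuted derivation I would take a proof of $\ru$'s conclusion (obtained from the original sub-proof together with the original $\ru$ application), weaken it via the hp-admissible weakening rule of \cref{thm:tree-seq-properties} to add $w : \phi$, apply $\boxrt$ (with the same fresh $v$, whose freshness in the new conclusion is preserved since $\boxrt$ eliminates every formula labelled $v$), and finally apply a $\boxlt$ step along $uRw$ to reach the original bottom. This realizes the permutation in the required form, with $\ruprime$ immediately above a $\boxlt$ inference. In every case $\ruprime$ remains end-active because its principal and auxiliary labels are unchanged; the hard part is precisely the sub-case $xRy = uRv$, where the naive swap fails due to freshness and the propagation must be rerouted through an alternative child of the principal label using the admissible structural rules of $\treegl$.
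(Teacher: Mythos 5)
Your overall architecture matches the paper's: fix $\ru = \boxlt$ (with $\fourru$ analogous) and case-split on $\ruprime$, exhibiting the swapped inferences. Two small remarks on the easy cases. Your claim that non-end-activity of $\ru$ forces the local rule's principal label $z$ to differ from both $x$ and $y$ does not hold: non-end-activity of a propagation rule means its principal label fails to be a pre-leaf \emph{or} its auxiliary label fails to be a leaf, so $z = y$ is possible (with $y$ a leaf and $x$ not a pre-leaf). The local-rule permutations go through regardless, since the tree is unchanged and the relevant formula occurrences are disjoint, so this is only an imprecision. Likewise, for $\ruprime \in \{\boxlt,\fourru\}$ the two inferences need not act on ``structurally independent'' parts of the tree --- the paper explicitly treats the case where both share the principal formula $x : \Box\phi$ --- but the permutation is again unaffected.

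The genuine problem is your $\boxrt$ sub-case $xRy = uRv$, where $\ru$ propagates along the fresh edge introduced by $\boxrt$. You are right that this configuration is not excluded by non-end-activity alone (the freshness of $v$ makes $v$ a leaf, so non-end-activity only forces $u$ to fail to be a pre-leaf), and you are right that the naive swap fails because the edge $uRv$ and the formula $v : \phi$ vanish below $\boxrt$; the paper's own proof dispatches this with a one-line appeal to label distinctness. But your repair is not a permutation: you retain the original $\ru$ application inside the proof of its own conclusion and then append a \emph{second} $\boxlt$, along $uRw$, below the $\boxrt$. The offending non-end-active instance above $\boxrt$ is never moved or removed, and the new instance you insert is itself non-end-active, since $w$ was chosen precisely because it is not a leaf. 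The construction therefore neither has the shape required by the definition of ``permutes below'' (an optional $i$-inverse, then $\ruprime$, then $\ru$) nor makes the progress that Theorem~\ref{thm:proofs-to-end-active} relies on: iterating it on the same bottom-most non-end-active rule would loop forever while accumulating further non-end-active $\boxlt$ instances. To close this case you would need either to show that such an instance is never selected by the permutation procedure, or to relax the end-activity condition on propagation rules to ``the auxiliary label is a leaf'' (which is all the linearization in Theorem~\ref{thm:tree-to-lin} actually uses); as written, the sub-case remains open.
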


\begin{proof} We consider the case where $\ru$ is an instance of $\boxlt$ as the $\fourru$ case is similar. We show that $\ru$ can be permuted down $\ruprime$ and consider a representative number of cases when $\ruprime$ is either $\negl$, $\fourru$, or $\boxrt$ as the remaining cases are similar.

\medskip

\noindent
$\negl$. By our assumption, we know that $z$ is distinct from $y$ in the inferences below left. We can therefore permute $\boxlt$ below $\negl$ as shown below right and we observe that $\negl$ remains end-active.
\begin{center}
\begin{tabular}{c c}
\AxiomC{$\tel, xRy, \Gamma, x : \Box \phi, y : \phi \sar z : \psi, \Delta$}
\RightLabel{$\boxlt$}
\UnaryInfC{$\tel, xRy, \Gamma, x : \Box \phi \sar z : \psi, \Delta$}
\RightLabel{$\negl$}
\UnaryInfC{$\tel, xRy, \Gamma, x : \Box \phi, z : \neg \psi \sar \Delta$}
\DisplayProof

&

\AxiomC{$\tel, xRy, \Gamma, x : \Box \phi, y : \phi \sar z : \psi, \Delta$}
\RightLabel{$\negl$}
\UnaryInfC{$\tel, xRy, \Gamma, x : \Box \phi, y : \phi, z : \neg \psi \sar \Delta$}
\RightLabel{$\boxlt$}
\UnaryInfC{$\tel, xRy, \Gamma, x : \Box \phi, z : \neg \psi \sar \Delta$}
\DisplayProof
\end{tabular}
\end{center}

\medskip

\noindent
$\fourru$. Let us suppose we have a $\boxlt$ instance followed by a $\fourru$ instance. There are two cases to consider: either the principal formula of $\boxlt$ is the same as for $\fourru$, or the principal formulae are distinct. We show the first case as the second case is similar. Then, our inferences are of the form shown below left, where $y$ and $z$ are distinct due to our assumption. We may permute $\boxlt$ below $\fourru$ as shown below right and we observe that $\fourru$ remains end-active.
\begin{center}
\begin{tabular}{c c}
\AxiomC{$\tel, xRy, xRz, \Gamma, x : \Box \phi, y : \phi, z : \Box \phi \sar \Delta$}
\RightLabel{$\boxlt$}
\UnaryInfC{$\tel, xRy, xRz, \Gamma, x : \Box \phi, z : \Box \phi \sar \Delta$}
\RightLabel{$\fourru$}
\UnaryInfC{$\tel, xRy, xRz, \Gamma, x : \Box \phi \sar \Delta$}
\DisplayProof

&

\AxiomC{$\tel, xRy, xRz, \Gamma, x : \Box \phi, y : \phi, z : \Box \phi \sar \Delta$}
\RightLabel{$\fourru$}
\UnaryInfC{$\tel, xRy, xRz, \Gamma, x : \Box \phi, y : \phi \sar \Delta$}
\RightLabel{$\boxlt$}
\UnaryInfC{$\tel, xRy, xRz, \Gamma, x : \Box \phi \sar \Delta$}
\DisplayProof
\end{tabular}
\end{center}

\medskip

\noindent
$\boxrt$. Suppose we have an instance of $\boxlt$ followed by an application of $\boxrt$ as shown below.
\begin{center}
\AxiomC{$\tel, xRy, zRu, \Gamma, x : \Box \phi, y : \phi, u : \Box \psi  \sar u : \psi, \Delta$}
\RightLabel{$\boxlt$}
\UnaryInfC{$\tel, xRy, zRu, \Gamma, x : \Box \phi, u : \Box \psi \sar u : \psi, \Delta$}
\RightLabel{$\boxrt$}
\UnaryInfC{$\tel, xRy, \Gamma, x : \Box \phi \sar z : \Box \psi, \Delta$}
\DisplayProof
\end{center}
By our assumption, the labels $y$ and $u$ are distinct, meaning, we can permute $\boxlt$ below $\boxrt$ as shown below. Trivially, $\boxr$ remains end-active after the permutation is performed.
\begin{center}
\AxiomC{$\tel, xRy, zRu, \Gamma, x : \Box \phi, y : \phi, u : \Box \psi  \sar u : \psi, \Delta$}
\RightLabel{$\boxrt$}
\UnaryInfC{$\tel, xRy, \Gamma, x : \Box \phi, y : \phi \sar z : \Box \psi, \Delta$}
\RightLabel{$\boxlt$}
\UnaryInfC{$\tel, xRy, \Gamma, x : \Box \phi \sar z : \Box \psi, \Delta$}
\DisplayProof
\end{center}
\end{proof}

\begin{customthm}{\ref{thm:tree-to-lin}}
Each end-active proof in $\treegl$ can be transformed into a proof in $\lingl$.
\end{customthm}

\begin{proof} We have included additional cases of the inductive step that are not included in the main text. 

\medskip

\noindent
$\fourru$. Let us suppose that $\prf$ ends with an application of $\fourru$ as shown  below.
\begin{center}
\AxiomC{$\tel, xRy, \Gamma, x : \Box \phi, y : \Box \phi \sar \Delta$}
\RightLabel{$\fourru$}
\UnaryInfC{$\tel, xRy, \Gamma, x : \Box \phi \sar \Delta$}
\DisplayProof
\end{center}
By IH, we know there exists a path $y_{1}, \ldots, y_{n}$ of labels from the root $y_{1}$ to the leaf $y_{n}$ in the premise of $\fourru$ such that $\lnsg = \Gamma(y_{1}) \sar \Delta(y_{1}) \sep \cdots \sep \Gamma(y_{n}) \sar \Delta(y_{n})$ is provable in $\lingl$. Since $\fourru$ is end-active, there are three cases to consider: either (1) neither $x$ nor $y$ occur along the path, (2) only $x$ occurs along the path, or (3) both $x$ and $y$ occur along the path. In each case, the conclusion is obtained by taking the linear nested sequent corresponding to the path $y_{1}, \ldots, y_{n}$ in the conclusion of the $\fourru$ instance above. In the first and second cases, we translate the entire $\fourru$ instance as the single linear nested sequent $\lnsg$. 
In the third case, we have that $x = y_{n-1}$ and $y = y_{n}$, meaning, the premise of the $\fourru$ instance shown below is provable in $\lingl$ by IH, where $\Gamma(y_{n-1}) = \Sigma_{1}, \Box \phi$ and $\Gamma(y_{n}) = \Sigma_{2}, \Box \phi$. As shown below, a single application of $\fourru$ yields the desired conclusion.
\begin{center}
\AxiomC{$\Gamma(y_{1}) \sar \Delta(y_{1}) \sep \cdots \sep \Sigma_{1}, \Box \phi \sar \Delta(y_{n-1}) \sep \Sigma_{2}, \Box \phi \sar \Delta(y_{n})$}
\RightLabel{$\fourru$}
\UnaryInfC{$\Gamma(y_{1}) \sar \Delta(y_{1}) \sep \cdots \sep \Sigma_{1}, \Box \phi \sar \Delta(y_{n-1}) \sep \Sigma_{2} \sar \Delta(y_{n})$}
\DisplayProof
\end{center}

\medskip

\noindent
$\disl$. Let us suppose that $\prf$ ends with an instance of $\disl$ as shown below.
\begin{center}
\AxiomC{$\tel, \Gamma, x : \phi \sar \Delta$}
\AxiomC{$\tel, \Gamma, x : \psi \sar \Delta$}
\RightLabel{$\disl$}
\BinaryInfC{$\tel, \Gamma, x : \phi \lor \psi \sar \Delta$}
\DisplayProof
\end{center}
By IH, we know there exist paths $v = y_{1}, \ldots, y_{n}$ and $v = z_{1}, \ldots, z_{k}$ of labels from the root $v$ to the leaves $y_{n}$ and $z_{k}$ in the premises of $\disl$ such that $\lnsg = \Gamma(v) \sar \Delta(v) \sep \cdots \sep \Gamma(y_{n}) \sar \Delta(y_{n})$ and $\lnsh = \Gamma(v) \sar \Delta(v) \sep \cdots \sep \Gamma(z_{k}) \sar \Delta(z_{k})$ are provable in $\lingl$. There are two cases to consider: either (1) $x \neq y_{n}$ or $x \neq z_{k}$, or (2) $x = y_{n} = z_{k}$. In the first case, if $x \neq y_{n}$, then we translate the entire $\disl$ inference as the single linear nested sequent $\lnsg$, and if $x \neq z_{k}$, then we translate the entire $\disl$ inference as $\lnsh$. In the second case, we know that the left premise $\lnsg$ and right premise $\lnsh$ of the $\disl$ inference below are provable with $\Gamma(y_{n}) = \Sigma, \phi$ and $\Gamma(z_{k}) = \Sigma, \psi$, and so, a single application of $\disl$ gives the desired result.
\begin{center}
\AxiomC{$\Gamma(y_{1}) \sar \Delta(y_{1}) \sep \cdots \sep \Sigma, \phi \sar \Delta(y_{n})$}
\AxiomC{$\Gamma(y_{1}) \sar \Delta(y_{1}) \sep \cdots \sep \Sigma, \psi \sar  \Delta(y_{n})$}
\RightLabel{$\disl$}
\BinaryInfC{$\Gamma(y_{1}) \sar \Delta(y_{1}) \sep \cdots \sep \Sigma, \phi \lor \psi \sar \Delta(y_{n})$}
\DisplayProof
\end{center}
\end{proof}

\begin{customthm}{\ref{thm:lingl-normalizable}}
Every proof in $\lingl$ can be transformed into a proof in normal form.
\end{customthm}

\begin{proof} Let $\prf$ be a proof in $\lingl$. We consider an arbitrary instance of a $\boxrt$ rule in $\prf$ and first show that every length-consistent $\fourru$ rule above $\boxrt$ can be permuted down into a block $\block$ of $\fourru$ rules above $\boxrt$. Afterward, we will show that every length-consistent $\boxlt$ rule can be permuted down into a block $\block'$ of $\boxlt$ rules above $\block$. As a result, all length-consistent local rules will occur in a block $\block''$ above the premise of the block $\block'$, showing that $\boxrt$ is preceded by a complete block. As these permutations can be performed for every $\boxrt$ instance in $\prf$, we obtain a normal form proof as the result.

Let us choose an application of $\boxrt$ in $\prf$, as shown below, preceded by a (potentially empty) block $\blockii$ of $\fourru$ rules.
\begin{center}
\AxiomC{$\vdots$}
\RightLabel{$\blockii$}
\doubleLine
\UnaryInfC{$\lnsg \sep \Gamma \sar \Delta \sep \Box \phi \sar \phi$}
\RightLabel{$\boxrt$}
\UnaryInfC{$\lnsg \sep \Gamma \sar \Box \phi, \Delta$}
\DisplayProof
\end{center}
We now select a bottom-most, length-consistent application of a $\fourru$ rule above the chosen $\boxrt$ application that does not occur within the block $\blockii$ of $\fourru$ rules. We show that $\fourru$ can be permuted below every local rule and $\boxlt$ rule until it reaches and joins the $\blockii$ block. We show that $\fourru$ can be permuted below $\negr$, $\disl$, and $\boxlt$ as the remaining cases are similar. Note that we are guaranteed that no other $\boxrt$ applications occur below $\fourru$ and above $\blockii$ since then $\fourru$ would not be length-consistent with the chosen $\boxrt$ application.

Suppose $\fourru$ occurs above a $\negr$ application as shown below left. The rules can be permuted as shown below right.
\begin{center}
\begin{tabular}{c c}
\AxiomC{$\lnsg \sep \Gamma, \Box \phi \sar \Delta \sep \Sigma, \psi, \Box \phi \sar \Pi$}
\RightLabel{$\fourru$}
\UnaryInfC{$\lnsg \sep \Gamma, \Box \phi \sar \Delta \sep \Sigma, \psi \sar \Pi$}
\RightLabel{$\negr$}
\UnaryInfC{$\lnsg \sep \Gamma, \Box \phi \sar \Delta \sep \Sigma \sar \neg \psi, \Pi$}
\DisplayProof

&

\AxiomC{$\lnsg \sep \Gamma, \Box \phi \sar \Delta \sep \Sigma, \psi, \Box \phi \sar \Pi$}
\RightLabel{$\negr$}
\UnaryInfC{$\lnsg \sep \Gamma, \Box \phi \sar \Delta \sep \Sigma, \Box \phi \sar \neg \psi, \Pi$}
\RightLabel{$\fourru$}
\UnaryInfC{$\lnsg \sep \Gamma, \Box \phi \sar \Delta \sep \Sigma \sar \neg \psi, \Pi$}
\DisplayProof
\end{tabular}
\end{center}
Suppose that we have $\fourru$ followed by an application of the $\disl$ rule.
\begin{center}
\AxiomC{$\lnsg \sep \Gamma, \Box \phi \sar \Delta \sep \Sigma, \Box \phi, \psi \sar \Pi$}
\RightLabel{$\fourru$}
\UnaryInfC{$\lnsg \sep \Gamma, \Box \phi \sar \Delta \sep \Sigma, \psi \sar \Pi$}
\AxiomC{$\lnsg \sep \Gamma, \Box \phi \sar \Delta \sep \Sigma, \chi \sar \Pi$}
\RightLabel{$\disl$}
\BinaryInfC{$\lnsg \sep \Gamma, \Box \phi \sar \Delta \sep \Sigma, \psi \lor \chi \sar \Pi$}
\DisplayProof
\end{center}
Invoking the hp-invertibility of $\fourru$ (\cref{lem:hp-invert-lingl}), we can permute $\fourru$ below $\disl$ as shown below.
\begin{center}
\AxiomC{$\lnsg \sep \Gamma, \Box \phi \sar \Delta \sep \Sigma, \Box \phi, \psi \sar \Pi$}
\AxiomC{$\lnsg \sep \Gamma, \Box \phi \sar \Delta \sep \Sigma, \chi \sar \Pi$}
\RightLabel{$\fourru^{-1}$}
\UnaryInfC{$\lnsg \sep \Gamma, \Box \phi \sar \Delta \sep \Sigma, \Box \phi, \chi \sar \Pi$}
\RightLabel{$\disl$}
\BinaryInfC{$\lnsg \sep \Gamma, \Box \phi \sar \Delta \sep \Sigma, \Box \phi, \psi \lor \chi \sar \Pi$}
\RightLabel{$\fourru$}
\UnaryInfC{$\lnsg \sep \Gamma, \Box \phi \sar \Delta \sep \Sigma, \psi \lor \chi \sar \Pi$}
\DisplayProof
\end{center}
Last, we show (below left) one of the cases where $\fourru$ is applied above a $\boxlt$ rule. 
We can permute the rules as shown below right.
\begin{center}
\begin{tabular}{c c}
\AxiomC{$\lnsg \sep \Gamma, \Box \psi, \Box \phi \sar \Delta \sep \Sigma, \Box \psi, \phi \sar \Pi$}
\RightLabel{$\fourru$}
\UnaryInfC{$\lnsg \sep \Gamma, \Box \psi, \Box \phi \sar \Delta \sep \Sigma, \phi \sar \Pi$}
\RightLabel{$\boxl$}
\UnaryInfC{$\lnsg \sep \Gamma, \Box \psi, \Box \phi \sar \Delta \sep \Sigma \sar \Pi$}
\DisplayProof  

&

\AxiomC{$\lnsg \sep \Gamma, \Box \psi, \Box \phi \sar \Delta \sep \Sigma, \Box \psi, \phi \sar \Pi$}
\RightLabel{$\boxl$}
\UnaryInfC{$\lnsg \sep \Gamma, \Box \psi, \Box \phi \sar \Delta \sep \Sigma, \Box \psi \sar \Pi$}
\RightLabel{$\fourru$}
\UnaryInfC{$\lnsg \sep \Gamma, \Box \psi, \Box \phi \sar \Delta \sep \Sigma \sar \Pi$}
\DisplayProof  
\end{tabular}
\end{center}

We can repeat the above downward permutations of bottom-most, length-consistent $\fourru$ rules, so that all length-consistent $\fourru$ rules occur in a block $\block$ above $\boxrt$ as shown below, where we let $\blockii'$ be a (potentially empty) block of $\boxlt$ rules above the $\block$ block of $\fourru$ rules.
\begin{center}
\AxiomC{$\vdots$}
\RightLabel{$\blockii'$}
\doubleLine
\UnaryInfC{$\lnsh$}
\RightLabel{$\block$}
\doubleLine
\UnaryInfC{$\lnsg \sep \Gamma \sar \Delta \sep \Box \phi \sar \phi$}
\RightLabel{$\boxrt$}
\UnaryInfC{$\lnsg \sep \Gamma \sar \Box \phi, \Delta$}
\DisplayProof
\end{center}
Next we show that every length-consistent $\boxlt$ rule occurring above the block $\blockii'$ can be permuted down to the block $\blockii'$. Let $\boxlt$ occurring above the block $\blockii'$ be length-consistent with the chosen $\boxrt$ rule. Notice that we need only consider downward permutations of $\boxlt$ rules with local rules as all $\fourru$ rules have already been permuted downward and no other $\boxrt$ rule can occur between $\boxlt$ and $\blockii'$ because then $\boxlt$ would not be length-consistent. We show how to permute the $\boxlt$ rule below a $\disl$ instance; the remaining cases are simple and similar.
\begin{center}
\AxiomC{$\lnsg \sep \Gamma, \Box \phi \sar \Delta \sep \Sigma, \psi, \phi \sar \Pi$}
\RightLabel{$\boxl$}
\UnaryInfC{$\lnsg \sep \Gamma, \Box \phi \sar \Delta \sep \Sigma, \psi \sar \Pi$}
\AxiomC{$\lnsg \sep \Gamma, \Box \phi \sar \Delta \sep \Sigma, \chi \sar \Pi$}
\RightLabel{$\disl$}
\BinaryInfC{$\lnsg \sep \Gamma, \Box \phi \sar \Delta \sep \Sigma, \psi \lor \chi \sar \Pi$}
\DisplayProof
\end{center}
By using the hp-invertibility of $\boxlt$ (\cref{lem:hp-invert-lingl}), we can permute $\boxlt$ below the $\disl$ rule.
\begin{center}
\AxiomC{$\lnsg \sep \Gamma, \Box \phi \sar \Delta \sep \Sigma, \phi, \psi \sar \Pi$}
\AxiomC{$\lnsg \sep \Gamma, \Box \phi \sar \Delta \sep \Sigma, \chi \sar \Pi$}
\RightLabel{$\boxlt^{-1}$}
\UnaryInfC{$\lnsg \sep \Gamma, \Box \phi \sar \Delta \sep \Sigma, \phi, \chi \sar \Pi$}
\RightLabel{$\disl$}
\BinaryInfC{$\lnsg \sep \Gamma, \Box \phi \sar \Delta \sep \Sigma, \phi, \psi \lor \chi \sar \Pi$}
\RightLabel{$\boxlt$}
\UnaryInfC{$\lnsg \sep \Gamma, \Box \phi \sar \Delta \sep \Sigma, \psi \lor \chi \sar \Pi$}
\DisplayProof
\end{center}
By successively permuting all $\boxlt$ rules down into a block above $\block$, we have that $\boxrt$ is preceded by a complete block in the proof. As argued above, this implies that every proof in $\lingl$ can be put into normal form.
\end{proof}

\end{document}